\documentclass[11pt,oneside]{article}
\usepackage{amssymb, mathrsfs, amsthm} 
\usepackage[centertags]{amsmath}
\usepackage[utf8]{inputenc}
\usepackage[T1]{fontenc}
\usepackage{endnotes}
\usepackage{graphicx}
\usepackage[margin=1.25in]{geometry}
\usepackage{bbm}
\usepackage[noblocks,affil-sl]{authblk}             
\usepackage{floatrow}
\floatsetup[table]{font=footnotesize}

\makeatletter

\providecommand{\tabularnewline}{\\}

\@ifundefined{date}{}{\date{}}


%
\newtheorem{definition}{Definition}
\newtheorem{remark}[definition]{Remark}

\newtheorem{theorem}[definition]{Theorem}
\newtheorem{proposition}[definition]{Proposition}

%
%
%

\newcommand\de{\mathrm{d}}

%
%
%

\makeatother

\begin{document}
\title{A Quantization Approach to the Counterparty Credit Exposure Estimation}

\author[1]{M. Bonollo \thanks{michele.bonollo@imtlucca.it}} 
\author[2]{L. Di Persio \thanks{luca.dipersio@univr.it}} 
\author[3]{I. Oliva \thanks{immacolata.oliva@univr.it}} 
\author[4]{A. Semmoloni \thanks{andrea.semmoloni@bancaprofilo.it}}

\affil[1]{Iason Ltd and IMT Lucca}
\affil[2,3]{Dept. of Computer Sciences, University of Verona, Italy}
\affil[4]{Banca Profilo}

\maketitle 
\begin{abstract}
During recent years the counterparty risk subject has received a growing attention
because of the so called {\it Basel Accord}. 
In particular the {\it Basel III Accord} asks the banks to fulfill finer conditions concerning
counterparty credit exposures arising from banks' derivatives, securities financing transactions, 
default and  downgrade risks characterizing the  Over The Counter (OTC) derivatives market, etc.
Consequently the development of effective and more accurate measures of risk have been pushed, particularly focusing on 
 the estimate of the future fair value of derivatives with respect to prescribed time horizon and fixed grid of time buckets .
Standard methods used to treat the latter scenario are mainly based on ad hoc implementations of the classic Monte Carlo (MC) approach,
which is characterized by a high computational time, strongly dependent on the number of considered assets.
This is why many financial players moved to more enhanced Technologies, e.g.,
\emph{grid computing} and {\it Graphics Processing Units} (GPUs) capabilities. 
In this paper we
show how to implement the \emph{quantization} technique,
in order to accurately estimate both pricing  and volatility values.
Our approach is tested to produce effective results for the counterparty risk evaluation, with a big improvement concerning
required time to run when compared to MC approach.




\end{abstract}

\section{Introduction and scope of the study}

\label{intro}
The financial crisis in 2007-2008, along with a consequent  increasing awareness
about the different sources of risk, has suggested to the various financial  players to give a greater attention to 
the \emph{counterparty credit risk} (CCR). 
CCR refers to
the situation when the counterparty \emph{A} has a deal, mainly of 
derivative type, such as an option or a swap, subscribed with the
counterparty \emph{B}. We suppose that, according to a valuation criteria
based on market prices,  \emph{A} observes a positive {\it fair value}, namely the so called
\textit{Mark-to-Market} (\emph{MtM}). It follows that \textit{A} has a credit
exposure with \emph{B}, hence, if \emph{B} defaults and no future
recovery rates or collateral was posted, then \emph{A} loses exactly
MtM, which is the cost for the replacement of the defaulted position. Such
type of risk is of particular interest within the  \textit{Over
The Counter} (OTC) derivatives markets, namely those markets
which are characterized by having  transactions settled directly
between the two counterparties and outside the stock exchange.

A slightly different perspective of CCR needs to be taken into account
when, in the risk management field, \textit{A} wants to assess \emph{ex-ante}
the risk belonging to the financial position underwritten with \textit{B}.
In such a case, considering the possible default  for \textit{B} as a random event both in time and in its magnitude,
 it turns out that the current MtM is a rather rough measure of the credit exposure of \textit{A}.
A better approach is given by considering the \emph{Exposure At Default} (EAD) parameter which can be seen 
as conservative expected value of the future MtM at the (random)
default time. 
EAD parameter can be seen
as conservative expected value of the future MtM at the
default time. 
An official way to estimate the EAD in various contexts
is given in the Basel framework, see \cite{bcbs0}, namely within the set
of recommendations on banking laws and regulations issued by the Basel
Committee on Banking Supervision. Such an approach is based on the
\emph{Expected Positive Exposure} (\emph{EPE}) evaluation, namely on a prudent probabilistic
time average of the future MtM. EAD follows just as a multiple, i.e.
$EAD=\alpha\cdot EPE$.

Moreover, we recall that the international accounting standards require
that in the derivatives evaluation a \emph{full fair value principle}
has to be satisfied, see, e.g., \cite{Mikes}.

If the the counterparty solvency level falls, we observe a downgrade
in its \emph{rating} and/or an increase in its \emph{spread}, therefore
the related OTC balance sheet evaluation has to embody this effect.
This implies that we have to adjust the MtM since it may decrease
not only due to the usual market parameters, e.g. underlying price,
underlying volatility, free risk rate, etc., but also because
of the credit spread volatility.

We refer to such an MtM adjustment as \textit{Credit value Adjustment}
(CVA), and the related adjusted fair value is sometimes called the
\emph{full fair value}. The adjusted MtM will be denoted by $MtM^{A}$
and we have $MtM^{A}=MtM-CVA$.

Even if the derivative has not been closed, the CVA effect can cause
a loss in the balance sheet, namely an \emph{unrealized} loss. The
Basel Committee estimates that $2/3$ of the losses in the financial
crisis years in the OTC sector were unrealized losses in the evaluation
process. The CVA (expected) loss is (or should be) absorbed by the balance sheet,
while the CVA volatility must be faced by the regulatory capital.
To this end, a new capital charge, the \emph{CVA charge}, was introduced
within the Basel III framework. We refer the reader to \cite{IFRS}
for a skillful analysis of the accounting principles and to \cite{bcbs1}
for a detailed discussion about the capital charge.

The EAD and the CVA computations pose a lot of methodological, financial
and numerical issues, as witnessed by a huge amount of literature
developed so far, see, e.g., \cite{Ces}, for a detailed review.

The present paper aims at studying the feasibility and the trade off
accuracy vs. computational effort of the \emph{quantization} approach
for the EAD-estimation (EPE), not at discussing the usefulness of
EAD/CVA measures, nor the related underlying or volatility models,
nor even at analyzing data quality and data availability. Therefore,
our main goal is the numerical CCR analysis, while we will address
the CVA issue in a future work.

In particular, we will consider a simple Black and Scholes model,
without taking into consideration collateral parameters in order to
focus the attention on the implemented numerical techniques.

The paper is organized as follows: Section \ref{epe_def} is a review
of the EPE definition given by the Basel Committee, while in Section
\ref{q_review} we give a description of the quantization approach
to the EPE with some theoretical results. Section \ref{proposal}
describes some practical cases and contains the set up of the associated
numerical experiments, finally in Section \ref{num} we report the
obtained numerical results along with their interpretation.

\section{The Basel EPE definition}\label{BaselEPESection}

\label{epe_def}

In what follows we shall give a review about the Basel Committee guidelines
concerning the estimation of the Exposure at Default, i.e. the EPE
parameter. Let us set the following notations that will be used throughout
the paper. 
\begin{itemize}
\item Given a derivative maturity time $0<T<+\infty$, we consider $K\in\mathbb{N}^{+}$
time steps $0<t_{1}<t_{2}<\cdots<t_{K}$ which constitute the so called
\emph{buckets array}, denoted by ${\bf B}^{T,K}$, where usually,
but not mandatory, $t_{K}=T$. 
\item For every $t_{k}\in{\bf B}^{T,K}$ we denote by $MtM\left(t_{k},S_{k}\right):=MtM\left(t_{k},S_{t_{k}}\right)$
the fair value (\emph{Mark-to-Market}) of a derivative at time bucket
$t_{k},$ with respect to the \emph{underlying value} $S_{k},$ at
time $t_{k}$.

For the sake of simplicity, we denote by $t_{0}=0$ the starting time
of the evaluation problem, by considering the European case.

\item For every $t_{k}\in{\bf B}^{T,K}$ we denote by $MtM\left(t_{k},S^{k}\right):=MtM\left(t_{k},S^{t_{k}}\right)$
the fair value (\emph{Mark-to-Market}) of a derivative at time bucket
$t_{k}$, with respect to the whole sample path $S^{k}:=\left\{ S_{t}:0\leq t\leq t_{k}\right\} $,
with initial time $t_{0}=0$. 
\item Taking into account previous definitions, we indicate by $\varphi=\varphi\left(T-t_{k},S_{k},\Theta\right)$
the pricing function for the given derivative, where $\Theta$ represents
the set of parameters from which such a pricing function may depends,
e.g., the free risk rate $r$ or the volatility $\sigma.$ 
\item We will use the notation $\phi^{MtM}$ to denote the Mark-to-Market
value pricing function. 
\end{itemize}
\begin{remark} We would like to underline that, in the Black-Scholes
framework, the volatility surface has to be flat, which does not occur
when real financial time series are considered. It follows that the
above-mentioned pricing function $\phi$ most likely depends on more
than the two considered parameters $r$ and $\sigma$. In particular,
usually $\Theta\in\mathbb{R}^{n}$, with $n>2$, where the extra parameters
characterize the specific geometric structure of the \textit{volatility
surface} associated to the considered contingent claim. \end{remark}

As usual in the counterparty credit risk EAD estimation, we stress
the role of the underlying, understood as the only stochastic market
parameter, while the others are deemed to be given, specifically,
we assume that they are deterministic and constant
or substituted by their deterministic forward values.

Henceforward, we shall often use the notation $\mathit{k}$ to indicate
quantities of interest evaluated at the $k-$th time bucket $\mathrm{t_{k}}.$
Besides, we give an account of the main amounts that will be used
in the following for EAD estimation, as they are defined in Basel
III, \cite{bcbs1}. 
\begin{itemize}
\item We denote the \textit{Expected Exposure} (EE) of the derivative by
\[
EE_{k}:=\frac{1}{N}\sum_{n=1}^{N}MtM\left(t_{k},S_{k,n}\right)^{+},\; N\,\in\,\mathbb{N}^{+}\;,
\]
which is nothing but the arithmetic mean of $N$ Monte Carlo simulated
MtM values, computed at the $k-$th time bucket $t_{k}$, with respect
to the underlying $\mathrm{S}.$

The\emph{ positive part} operator $\left(\cdot\right)^{+}$is effective
if we are managing a symmetric derivative, such as an interest rate
swap or a portfolio of derivatives. For a single option, it is redundant,
as the fair value of the option is always positive from the buy side
situation. We want to stress that the sell side does not imply counterparty
risk, hence it is out of context.

\item We evaluate the \emph{Expected Positive Exposure} (EPE) by 
\[
EPE:=\frac{\sum_{k=1}^{K}EE_{k}\cdot\Delta_{k}}{T},
\]
where $\Delta_{k}=t_{k}-t_{k-1}$ indicates the time space between
two consecutive time buckets at $k$-th level. If the time buckets
$t_{k}$ are equally spaced, then the formula reduces to $EPE=\frac{1}{K}{\sum_{k=1}^{K}EE_{k}}$.
Therefore the EPE value gives the time average of the $EE_{k}.$ 
\item We set 
\[
EEE_{1}:=EE_{1}\mbox{ and }EEE_{k}:=Max\left\{ EE_{k},EEE_{k-1}\right\} \;,\mbox{ for every }k=1,\ldots,K.
\]
Due to its non decreasing property, $EEE_{k},$ which is called the
\emph{Effected Expected Exposure,} takes into account the fact that,
once the time decay effect reduces the MtM as well as the counterparty
risk exposure, the bank applies a roll out with some new deals. 
\item We define the \emph{Effected Expected Positive Exposure} (EEPE), by
\[
EEPE:=\frac{\sum_{k=1}^{K}EEE_{k}\cdot\Delta_{k}}{T}\;.
\]

\end{itemize}
In order to avoid too many inessential regulatory details, we will
work on $EE_{k}$ and EPE, the others being just arithmetic modifications
of them.

\begin{remark}\label{remark1} Let us point out that the definition
of $EE_{k}$ is taken from the Basel regulatory framework. We find
it quite strange, since, instead of giving a theoretical principle
and suggesting the Monte Carlo technique just as a possible computational
tool, the simulation approach is officially embedded in the general
definition. \end{remark}

In what follows we shall rewrite previously defined quantities in
continuous time, and we add the index $A$ to indicate the \textit{adjusted}
definitions. Moreover we consider the dynamics of the underlying $S_{t}:=\left\{ S_{t}\right\} _{t\in[0,T]}$,
$T\in\mathbb{R}^{+}$ being some expiration date, as an It\^{o} processes,
defined on some filtered probability space $\left(\Omega,\mathcal{F},\mathcal{F}_{t\in[0,T]},\mathbb{P}\right)$.
As an example, $S_{t}$ is the solution of the stochastic differential
equation defining the geometric Brownian motion, $\mathcal{F}_{t\in[0,T]}$
is the natural filtration generated by a standard Brownian motion
$W_{t}=(W_{t})_{t\in[0,T]}$ starting from a complete probability
space $\left(\Omega,\mathcal{F},\mathbb{P}\right)$, where $\mathbb{P}$
could be the so called \textit{real world} probability measure or
an equivalent risk neutral measure in a martingale approach to option
pricing, see, e.g., \cite[Ch.5]{Shreve}.

The \emph{Adjusted Expected Exposure} $EE^{A}$ is given by 
\begin{equation}
EE_{k}^{A}:=\mathbb{E}_{\mathbb{P}}\left[MtM\left(t_{k},S_{k}\right)^{+}\right]=\int\varphi\left(t_{k},S_{k},\Theta\right)d\mathbf{\mathbb{P}}\cong\frac{1}{N}\sum_{n=1}^{N}MtM\left(t_{k},S_{k,n}\right)^{+}=\widehat{EE_{k}^{A}}\label{eeA}
\end{equation}

We define the \emph{Adjusted Expected Positive Exposure} $EPE^{A}$
as 
\begin{equation}
EPE^{A}:=\int EE_{t}^{A}\de t=\int\left[\int\varphi\left(t,S_{k},\Theta\right)\de\mathbb{P}\right]\de t\;.\label{epeA}
\end{equation}

In this new formulation, the Basel definition is simply one of the
many methods to estimate the expected fair value of the derivative
in the future.

\begin{remark}\label{remark2} We skip any comment about the choice
of the most suitable probability measure $\mathbf{\mathbb{P}}$ to
be used in the calculation of $EE_{k},$ this being beyond the aim
of this paper.

For a detailed discussion on the role played by the \emph{risk neutral}
probability for the drift $S_{t}$ or the \emph{historical} real world
probability, see, e.g., \cite{cas}. \end{remark}

\begin{remark}\label{remark3} Let us observe that the discount factor,
or \emph{numeraire}, is missing in the EPE definition. It was not
forgot, but this is one of the several conservative proxies used in
the risk regulation. \end{remark}

If we adopt a simulation approach, for the underlying path construction
we could generate, for each simulation \emph{n}, a path with an array
of points $\left(x_{n,t_{k}}\right)$. This algorithm is called \emph{path-dependent}
\emph{simulation} (PDS). Alternatively, for each time bucket and for
each simulation, we could jointly generate our $N\cdot K$ points.
This approach is referred as \emph{direct-jump to simulation} date
(DJS). We will come back on PDS and DJS approaches in Section \ref{num}.

The figures below, taken from \cite{pyk1}, well clarify the difference

\begin{figure}[H]
\centering{}\includegraphics[scale=0.1]{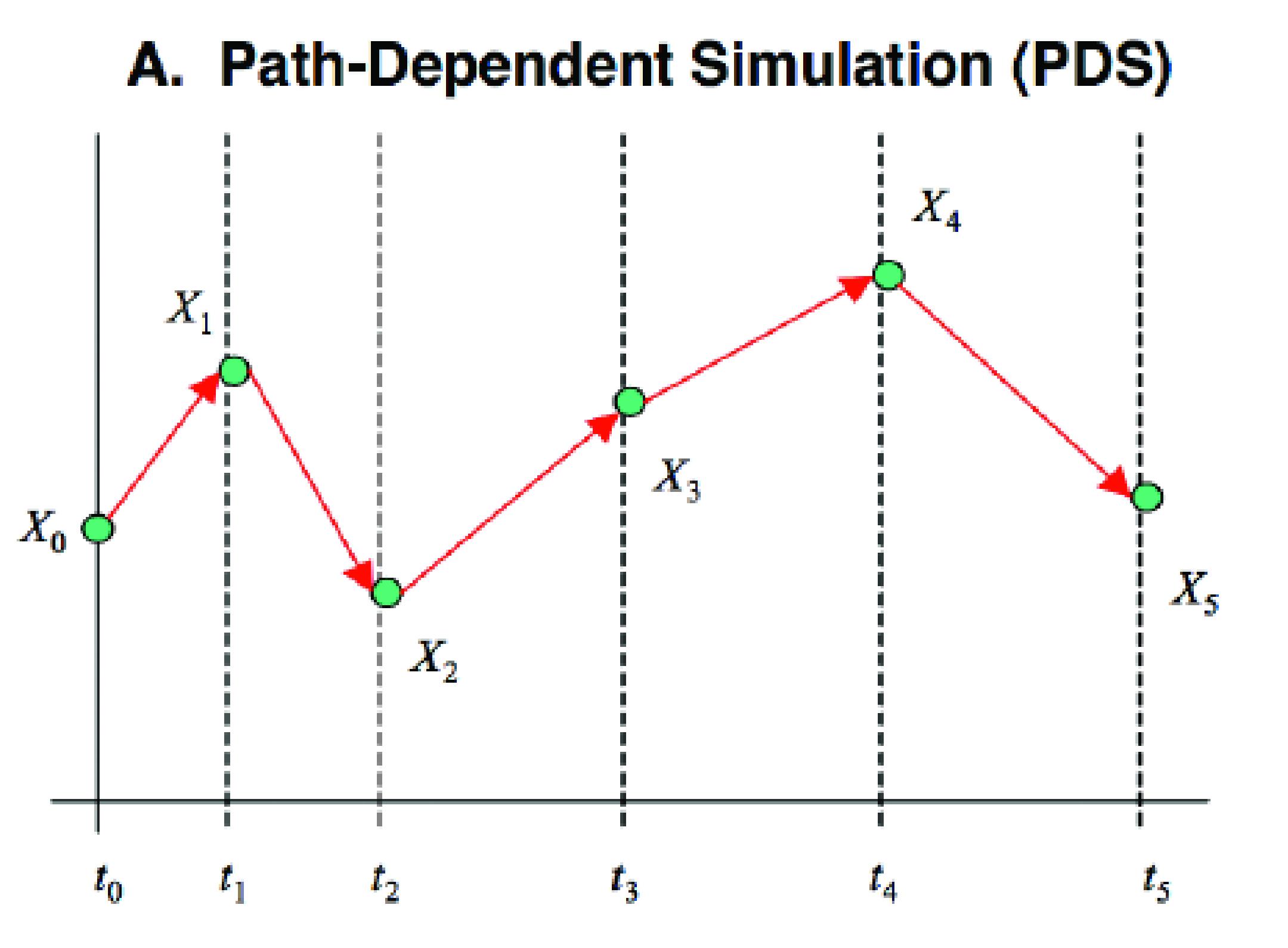}\includegraphics[scale=0.1]{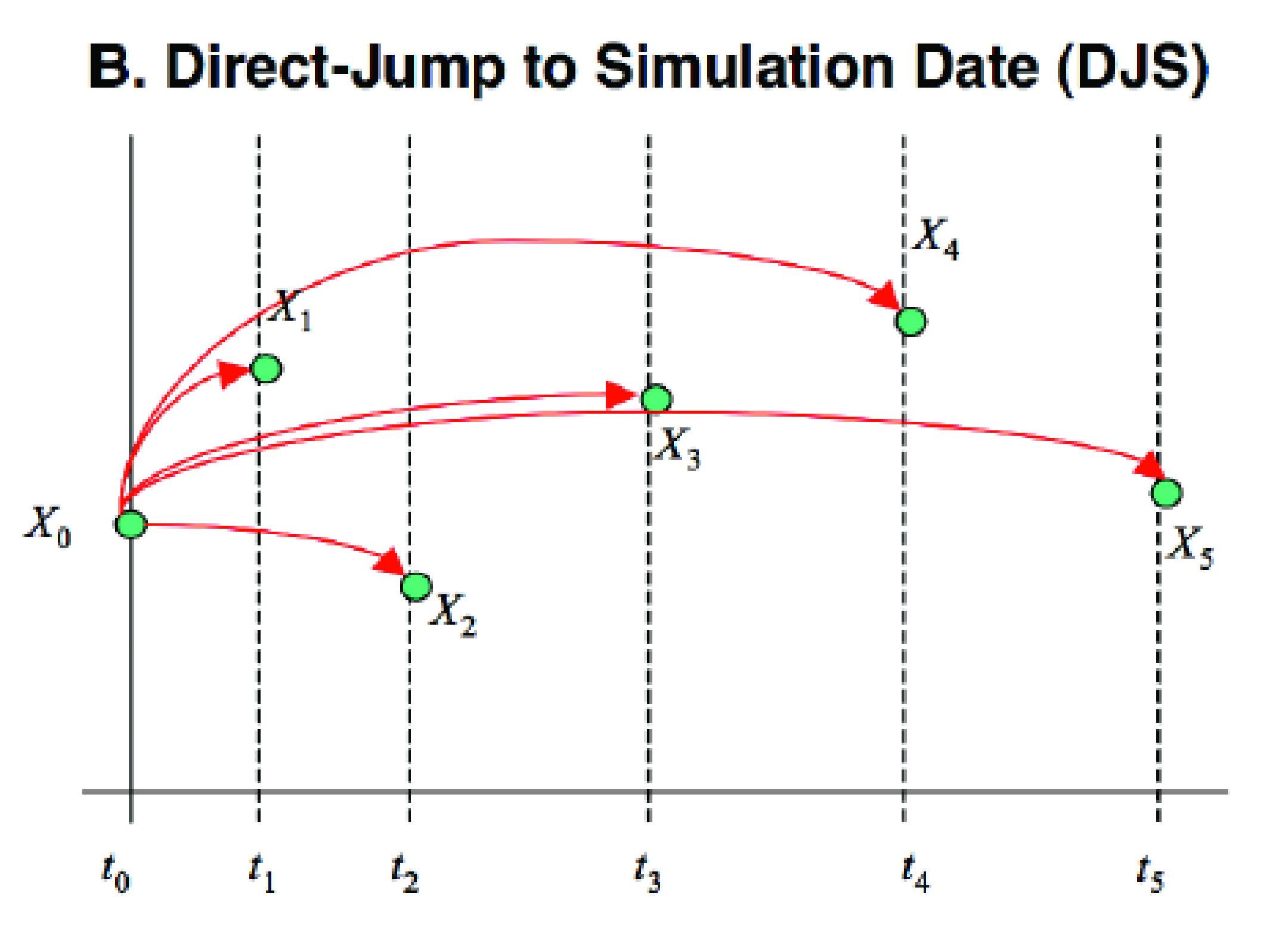}\label{PDS_DJS}
\protect\protect\caption{{\footnotesize{}{}{On the left: an example of PDS approach with
six time buckets. On the right: an example of DJS approach with six
time buckets.}}}
\end{figure}

Finally, we recall that, in the risk management application, another
widely used quantity is the \emph{potential future
exposure} $PFE_{\alpha},$ a quantile based figure of the extreme
risk. In a continuous setting, we define the potential future exposure
in the following way 
\begin{equation} \label{PFE}
PFE\left(\alpha,t_{k}\right) := MtM^{*}\mbox{ such that }P\left\{ MtM\left(t_{k},S_{k}\right)^{+} 
\geq MtM^{*}\right\} =1-\alpha\;.
\end{equation}

\section{A short quantization review} \label{q_review}

The \emph{quantization} technique has been known from several decades
and it comes from engineering, when addressing the issue of converting
an analogical signal, e.g. images or sounds, into a discretized digital
information. Other important areas of application are data compression
and statistical multidimensional clustering. For a classical reference
concerning the quantization approach, we refer to \cite{ger}, while 
\cite{pag2} gives a survey of the literature concerning
fair value \emph{pricing} problems for plain vanilla options, American
and exotic options, basket CDS.

In addition, alternative quantization approaches, such as the so-called
\emph{dual quantization} and the treatment of underlying assets driven
by more structured stochastic processes, are taken into consideration
in \cite{pag4} and \cite{pag3}.

In this section, we shall give a sketch of the quantization idea,
by emphasizing its practical features, but without giving all the details
concerning the mathematical theory behind it.

Let $X\in\mathbb{R}^{d}$, $d\in\mathbb{N}^{+}$, be a $d-$dimensional
continuous random variable, defined over the probability space $\left(\Omega,\mathscr{\mathcal{F}},\mathbb{P}\right)$
and let $\mathbb{P}_{X}$ the measure induced by $X$. The quantization
approach is based on the \textit{approximation} of $X$ by a $d$-dimensional
discrete random variable $\widehat{X}$, further details of which
will be given later, defined by means of a so called \textit{quantization
function} $q$ of $X$, that is to say, $\widehat{X}:=q\left(X\right),$
in such a way that $\widehat{X}$ takes $N\in\mathbb{N}^{+}$ finitely
many values in $\mathbb{R}^{d}.$ The finite set of values for $\widehat{X}$
is denoted by $q\left(\mathbb{R}\right):=\left\{ x_{1},...,x_{N}\right\} $
and it is called a \emph{quantizer} of $X,$ while the application
$q\left(X\right)$ is the related \emph{quantization}. To distinguish
the one-dimensional case ($d=1$) from the $d-$dimensional one ($d>1$),
the terms \emph{quantization}, resp. \emph{vector quantization} (VQ),
are usually used.

Such a set of points in $\mathbb{R}^{d}$ can be used as \textit{generator
points} of a \emph{Voronoi tessellation.} Let us recall that, if ${\bf {X}}$
is a metric space with a distance function ${\bf {d}},$ ${\bf {K}}$
is an index and $(P_{k})_{k\in{\bf {K}}}$ is a tuple of ordered collection
of nonempty subsets of ${\bf {X}}$, then the \emph{Voronoi cell}
${\bf {R}_{k}}$ generated by the site $P_{k}$ is defined as the following the set 
\[
{\bf {R}}_{k}:=\left\{ x\in X\;|\;{\bf d}(x,P_{k})\leq{\bf d}(x,P_{j})\;\text{for all}\; j\neq k\right\} \;.
\]
Therefore, the Voronoi tessellation is the tuple of cells $\left({\bf R}_{k}\right)_{k\in{\bf {K}}}$.
In our case such a tessellation reduces to substitute the set of cells
$P_{k}$ with a finite number $x_{1},\ldots,x_{N}$ of distinct points
in $\mathbb{R}^{d}$, so that the Voronoi cells are convex polytopes.

More precisely, we construct the following Voronoi tessels with respect
to the euclidean norm $\|\cdot\|$ in $\mathbb{R}^{d}:$ 
\[
C\left(x_{i}\right)=\left\{ y\in\mathscr{\mathcal{\mathfrak{\mathbb{R^{\mathit{d}}}}\mathit{\mathrm{:}\left|y-x_{i}\right|<\|y-x_{j}\|\forall j\neq i}}}\right\} \;,
\]
with associated quantization function $q$ defined as follows 
\begin{equation}
q\left(X\right)=\sum_{i=1}^{N}x_{i}\cdot1_{C_{i}\left(x\right)}\left(X\right)\;.\label{qfunction}
\end{equation}

Such a construction allows us to rigorously define a probabilistic
setting for the claimed random variable $\widehat{X},$ by exploiting
the probability measure induced by the continuous random variable
$X.$ In particular, we have a probability space $\left(\Omega,\mathcal{F},\mathbb{P}_{\widehat{X}}\right)$,
where the set of elementary events is given by $\Omega:=\left\{ x_{1},\ldots,x_{N}\right\} $,
and the probability measure $\mathbb{P}_{\widehat{X}}$ is defined
by the following set of relations 
\[
0<\mathbb{P}_{\widehat{X}}\left(x_{i}\right):=P_{X}\left(X\in C\left(x_{i}\right)\right)=:p_{i},\;\mbox{for }i=1,\ldots,N\;.
\]

The goal of such an approximation is to deal efficiently with applications
that arise when calculating some functionals of the random vector
$X$, as in the derivative pricing problem case, in order to evaluate the
expectation $\mathbb{E}\left[f\left(X\right)\right]$ of a certain
payoff function $f$ of $X$ or when we have to deal with a quantile
base indicator, as it happens in the risk management field.

We would like to take into account the former case, in particular we will consider 
the following approximation
\[
\mathbb{E}\left[f\left(X\right)\right]\cong\mathbb{E}\left[f\left(\widehat{X}\right)\right]=\sum_{i}f\left(x_{i}\right)\cdot p_{i}\;,
\]
with control on the accuracy of the chosen quantization.

Let us Assume $X\,\in\, L^{p}\left(\Omega,\mathrm{\mathscr{\mathcal{F}}},\mathbb{P}\right),p\,\in\,\left(1,\infty\right)$
and define the $L^{p}$ error as follows 
\begin{equation}
E\left[\|X-\widehat{X}\|^{p}\right]:=\int_{\mathbb{R}^{d}}\mathrm{\min_{i=1,\ldots,N}}\|x-x_{i}\|^{p}\de\mathbb{P}_{X}\left(x\right)\;,\label{LpError}
\end{equation}
where we denote by $d\mathbb{P}_{X}$ the probability density function
characterizing the random variable $X.$ The integrand in eq. \eqref{LpError}
is always well defined, being a minimum with respect to the finite
set of generators $x_{1},\ldots,x_{N}.$

Concerning eq. \eqref{LpError}, 
the task is to find $the$, or, at least, one, \emph{optimal quantizer}, understood
as the set of Voronoi generators minimizing the value of the integral,
once both parameters $d,N,p$ and the probability density of $X$
are given. Even if such a problem could be particularly difficult in
the general case, also because  it may rise to infinitely many solutions, this
is not the case in our setting. In fact,  we aim at considering a standard Black-Sholes
framework, where the only source of randomness is given by a Brownian
motion. In particular, we shall deal with the pricing problem related
to a European style option, therefore we are interested in the distribution
of the driving random perturbation at maturity time $T$, which means
that we are dealing with the quest of an optimal quantizer for a $d$-dimensional
Gaussian random variable, assumed to be standard, up to suitable transformation
of coordinates, namely $X\sim\mathcal{N}\left(0,I_{d}\right)$.

Algorithms to get the optimal quantizer can be found within the aforementioned references.
Moreover, when the dimension  $d\leq10,$ there is a well established literature concerning
how to find the optimal quantizers when the Gaussian framework is considered,
see, e.g., the web site \textit{http://www.quantize.maths-fi.com/} and \cite{pag1,pag2,Sellami}.

A particularly important quantity related to the choice of the optimal
quantizer is represented by the so called \emph{distortion} parameter
\begin{equation}
D_{X}^{N}(\widehat{X}):=E\left[\|X-\widehat{X}\|^{2}\right]=\int_{\mathbb{R}^{d}}\mathrm{\min_{i=1,\ldots,N}}\|x-x_{i}\|^{2}\cdot d\mathbb{P}_{X}\left(x\right)\;,\label{Distortion}
\end{equation}
which is defined with respect to the quantizer $\left\{ x_{1},\ldots,x_{N}\right\} .$

If the quantization function $\widehat{X}$ takes values in the set
of optimal generators, then the distortion parameter admits a minimum, which will be indicated by $\underline{D}_{X}^{N},$
with $lim_{N\rightarrow\infty}\underline{D}_{X}^{N}=0.$

The following theorem, originally due to Zador, see \cite{Zad1,Zad2},
generalized by Bucklew and Wise in  \cite{BucWise} and then revisited
in \cite{pag2} in its non asymptotic version as a reformulation
of the Pierce lemma, gives a quantitative result about the distortion
magnitude.

\begin{theorem}{{[}Zador{]}} \label{zador_thm} Let $X\in L^{p+\varepsilon}\left(\Omega,\mathcal{F},\mathbb{P}\right)$,
for $p\in(0,+\infty)$, $\varepsilon>0$, and let $\Gamma$ be the
$N-$size tessellation of $\mathbb{R}^{d}$ related to the quantizer
$\widehat{X}.$ Then, 
\begin{equation}
\mathrm{\underset{\mathit{N}}{lim}}\left(N^{\frac{p}{d}}\mathrm{\underset{|\Gamma|\leq N}{\min}}\|X-\widehat{X}\|_{\mathbb{P}}^{p}\right)=J_{p,d}\left(\underset{\mathbb{R}^{d}}{\int}g^{\frac{d}{d+p}}\left(x\right)\de x\right)^{1+\frac{p}{d}}\;,
\end{equation}
where we assume $\de\mathbb{P}=g\de\lambda_{d}+\de\nu,$ for some
suitable function $g,$ and $\nu\perp\lambda_{d},$ $\de\lambda_{d}$
being the Lebesgue measure on $\mathbb{R}^{d}$, while the constant
$J_{p,d}$ corresponds to the case $X\sim Unif\left([0,1]^{d}\right).$
\end{theorem}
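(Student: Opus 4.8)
The plan is to follow the classical route, reducing everything to a single model computation: first establish the statement for the uniform distribution on the unit cube (which is what \emph{defines} the constant $J_{p,d}$), then transfer the result to an arbitrary absolutely continuous measure by a localization/Riemann-sum argument, and finally strip away the two technical restrictions by controlling the tails (this is where the extra moment $X\in L^{p+\varepsilon}$ is used) and by discarding the singular part $\nu$. First I would treat $X\sim\mathrm{Unif}([0,1]^d)$ and set $a_N:=N^{p/d}\min_{|\Gamma|\le N}\|X-\widehat X\|_{\mathbb P}^p$. The key observation is a scaling/subadditivity relation: partition $[0,1]^d$ into $k^d$ congruent subcubes of side $1/k$ and place a rescaled optimal $n$-point quantizer in each, for a total of $N=k^d n$ generators. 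Since rescaling a cube by the factor $1/k$ multiplies $p$-th powers of Euclidean distances by $k^{-p}$, and each subcube carries mass $k^{-d}$, summing the cell contributions yields $\min_{|\Gamma|\le N}\|\cdot\|^p\le k^{-p}\min_{|\Gamma|\le n}\|\cdot\|^p$, hence $a_{k^d n}\le a_n$. Combining this with the monotonicity of the quantization error in $N$, a Fekete-type (subadditivity) argument shows that $J_{p,d}:=\lim_N a_N$ exists in $(0,\infty)$; positivity comes from a crude covering lower bound, since $N$ balls of radius $\rho$ cover volume $\lesssim N\rho^d$, forcing $\rho\gtrsim N^{-1/d}$ and therefore $a_N\ge c>0$.

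Next I would pass to a bounded, compactly supported density $g$. Cover its support by small cubes $Q_j$ on which $g\approx g_j$ is nearly constant, so that the restricted measure on $Q_j$ is close to a scaled uniform measure. Allocating $N_j$ generators to $Q_j$ and applying the uniform result cell by cell gives a local contribution $\approx J_{p,d}\,g_j\,\lambda_d(Q_j)\,(N_j/\lambda_d(Q_j))^{-p/d}$. Minimizing $\sum_j g_j\,\lambda_d(Q_j)\,N_j^{-p/d}$ subject to $\sum_j N_j\le N$ is a Hölder/Lagrange optimization whose solution is the allocation $N_j\propto(g_j\,\lambda_d(Q_j))^{d/(d+p)}$, and substituting back produces exactly $N^{-p/d}J_{p,d}\,(\sum_j g_j^{d/(d+p)}\lambda_d(Q_j))^{1+p/d}$; refining the partition turns the sum into $\int g^{d/(d+p)}\,d\lambda_d$, which is the claimed formula. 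The delicate point is that this localization must be carried out as a matching pair of estimates: the upper bound is the \emph{easy} direction, since it only requires exhibiting the explicit quantizer above, whereas the lower bound — that \emph{no} quantizer can beat the local uniform rate — is the \emph{main obstacle}, because it requires controlling how an arbitrary (possibly cleverly placed) optimal quantizer distributes its points across the cells, and showing it cannot exploit the variation of $g$ to do better.

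Finally I would remove the two standing assumptions. To drop compact support I would truncate $g$ outside a large ball $B_R$; here the hypothesis $X\in L^{p+\varepsilon}$ is precisely what is needed, since the extra moment, fed into the non-asymptotic Pierce-type bound already referenced in the statement, shows that the contribution of the tail $\mathbb{R}^d\setminus B_R$ to $N^{p/d}\|X-\widehat X\|^p$ is uniformly small in $N$ and vanishes as $R\to\infty$. To handle the Lebesgue decomposition $d\mathbb{P}=g\,d\lambda_d+d\nu$ with $\nu\perp\lambda_d$, I would use that a purely singular measure, being concentrated on a Lebesgue-null set, is quantized at a strictly faster rate, so its error is $o(N^{-p/d})$ and does not enter the leading term. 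Consequently only the absolutely continuous density $g$ survives in the limit, yielding the stated expression.
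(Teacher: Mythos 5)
A preliminary remark: the paper itself contains no proof of this theorem. It is quoted as a known result, attributed to Zador, generalized by Bucklew and Wise, and revisited in Pag\`es et al., so there is no internal argument to compare yours against; the benchmark is the classical literature proof (Zador/Bucklew--Wise, in the form given in Graf and Luschgy's monograph). Your roadmap is precisely that classical proof: the self-similarity and Fekete-type argument on the unit cube that defines $J_{p,d}$, the localization with the H\"older--Lagrange allocation $N_j\propto g_j^{d/(d+p)}\lambda_d(Q_j)$ over congruent cubes, the Pierce-type moment bound (this is indeed where $X\in L^{p+\varepsilon}$ enters) to remove compact support, and the negligibility of the singular part. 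The scaling computation giving $a_{k^dn}\le a_n$, the covering lower bound for positivity of $J_{p,d}$, and the substitution of the optimal allocation producing $N^{-p/d}J_{p,d}\bigl(\int g^{d/(d+p)}\,\de\lambda_d\bigr)^{1+p/d}$ are all correct as far as they go.

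There are, however, two genuine gaps, and they sit exactly where the theorem is hard. First, the lower bound in the localization step is \emph{named} by you as ``the main obstacle'' but no mechanism for it is supplied; acknowledging the obstacle is not overcoming it. The standard resolution is the ``firewall'' point-counting argument: one shows that adjoining to an arbitrary near-optimal quantizer a controlled set of auxiliary points on the boundaries of the cubes $Q_j$ can only decrease the error while decoupling the cells, so the error restricted to each $Q_j$ dominates the optimal error of a (scaled) uniform distribution on $Q_j$ with a bounded number of points; a counting argument over allocations then shows nothing can beat the H\"older bound. Without this, your argument proves only the upper inequality ``$\le$'' in the displayed limit. Second, the claim that a singular measure $\nu\perp\lambda_d$ satisfies $N^{p/d}e_N(\nu)^p\to 0$ is itself a lemma of real substance, not a citable triviality: a singular measure can perfectly well have full topological support, so ``concentrated on a Lebesgue-null set'' does not visibly force a faster rate; the proof requires covering a carrying null set by cubes of arbitrarily small total volume and re-using the cube estimate, plus a splitting of the point budget between $g\lambda_d$ and $\nu$. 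In short: right route, correct upper-bound half, but the two hard lemmas --- the localized lower bound and the singular-part estimate --- are invoked rather than proved.
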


Let us recall that the optimal quantizer is \emph{stationary} in the
sense that $E\left[X|\widehat{X}\right]=\widehat{X}$, hence $\int_{C\left(x_{i}\right)}\left(x_{i}^{*}-x\right)\de\mathbb{P}_{X}\left(x\right)=0,\: i=1,\ldots,N.$
In what follows, we focus on the case $d=1$ and $p=2,$ hence
considering a one-dimensional stochastic process and the quadratic distortion
measure, therefore, in terms of Th. \eqref{zador_thm}, we have $J_{p,1}=\frac{1}{2^{p}\cdot\left(p+1\right)}$,
hence $J_{2,1}=\frac{1}{12}.$

\begin{remark}\label{remark5} For practical applications, and in
order to compare numerical results obtained by quantization with those
produced by standard Monte Carlo techniques, we are mainly interested
in the order of convergence to zero of the distortion parameter. In
particular, thanks to Zador Theorem, we have that the quadratic distortion
is of order $O\left(N^{-2}\right).$ \end{remark}

It is also possible to provide results concerning the accuracy of
the approximation $\mathbb{E}\left[f\left(\widehat{X}\right)\right],$
by mean of the distortion's properties, see \cite{pag2}. In particular,
we can distinguish the following cases: 
\begin{description}
\item [{Lipschitz case}] if $f$ is assumed to be a Lipschitz function,
then 
\begin{equation}
\left|E\left[f\left(X\right)\right]-E\left[f\left(\widehat{X}\right)\right]\right|\leq K_{f}\cdot\left\Vert X-\widehat{X}\right\Vert _{1}\leq KL_{f}\cdot\left\Vert X-\widehat{X}\right\Vert _{2}\;.\label{lip1}
\end{equation}

\item [{The smoother Lipschitz derivative case}] If $f$ is assumed to
be continuously differentiable with Lipschitz continuous differential
$Df$, then, by performing the quantization using  an optical
quadratic grid $\Gamma$ and by applying Taylor expansion, we have
\begin{equation}
\left|E\left[f\left(X\right)\right]-E\left[f\left(\widehat{X}\right)\right]\right|\leq KL_{Df}\cdot\left\Vert X-\widehat{X}\right\Vert _{1}\;.\label{lip2}
\end{equation}

\item [{The Convex Case}] If $f$ is a convex function and $\widehat{X}$
is stationary, then exploiting the Jensen inequality, we have 
\begin{equation}
E\left[f\left(\widehat{X}\right)\right]=E\left[f\left(E\left[X|\widehat{X}\right]\right)\right]\leq E\left[f\left(X\right)\right]\;,\label{convex}
\end{equation}

\end{description}
hence, the approximation by the quantization is always a lower bound
for the true value of $E\left[f\left(X\right)\right].$

\begin{remark}\label{remark optimal grid} We emphasize that the
(optimal) quantization grid for given parameter values of $N,d$ and
$p$ is calculated \emph{off-line} once and for all. Then, in the
computational effort comparison vs. a strict sense Monte Carlo approach,
we must take into account that with the quantization one only has
to \emph{plug-in} the points in the numerical model, not to calculate
or simulate them. \end{remark}

\begin{remark}\label{remark funct} An increasing literature is devoted
to the \emph{functional quantization}. In this case, the ``random
variable'', which has to be discretized in an optimal way, belongs
to a suited functional space, e.g. one can consider  an application $X$ such that $X:\Omega\rightarrow\left(L_{T}^{2},\left\Vert \cdot \right\Vert _{2}\right),$
where $L_{T}^{2}=L_{T}^{2}\left(\left[0,T\right],\de t\right).$ We
recall that, even if in mathematical finance applications the stochastic
calculus in continuous time is a very useful tool, in practice we
have to deal with discrete \emph{sampling}, in fact, Asian options
or any other \emph{look-back} derivatives have to work with a discrete
calendar for the \emph{fixing} instants. Then, depending on the specific
application, one can choose if to approximate the discrete time real-world-problem
by optimal quantization or if it is better to quantize the continuous
time setting and then to apply it to the practical application, see,
e.g.,\cite{pag4} for a survey on such a topic. \end{remark}

\section{The Proposal: \emph{quantization} for the EPE calculation}

\label{proposal}

In the following, we focus our attention on the calculation of EPE
for option styles derivatives in the Black-Scholes standard setting,
see \cite{bs}. Strictly speaking, the underlying evolves according
to the following stochastic differential equation

\begin{equation}\label{BS}
\de S_{t}  =S_{t}\cdot\de t+\sigma S_{t}\de W_{t}\;,
\end{equation}
with  related solution
\begin{equation}\label{BSsolution}
S_{t}  =S_{0} \cdot \exp\left[\left(r-\frac{\sigma^{2}}{2}\right)t+\sigma W_{t}\right]\;,
\end{equation}
where $r,\sigma>0,$ $(W_{t})_{t\geq0}$ is a Brownian motion, while $S_{0}$
is the initial value of the underlying $S_{t}.$

It is well known that the plain vanilla \emph{call} and \emph{put}
options have a closed pricing formula. Since we do not want to give here a survey
on the several extensions to the model, we content ourselves  saying that,
in the equity and Forex derivatives markets, the most important model
extensions of eq. \eqref{BS} are the \emph{local volatility models,} the \emph{Heston}
model and the \emph{SABR} models, see, e.g., \cite{Dup}, \cite{Hes},
\cite{Hag}, respectively.

As usual in a new methodology proposal, as a first step we prefer
to check the techniques in a simple framework, in order to have clear
insights about its properties.

We guess that the Monte Carlo approach is just one of the many feasible
techniques for EE and EPE calculation. After all, it is computationally quite
expensive, as shown by the following example.
A medium bank easily has $D=O\left(10^{4}\right)$ derivatives
deals. In order to validate the internal models for the EPE calculation,
usually the central banks require at least $K=20$ time steps and
$N=2000$ simulations. Finally, let us suppose that the relevant underlying
(often called \emph{risk factors}) to be simulated have $O\left(10^{3}\right)$
order. It is the sum of equity underlying, FX significant rates and
rate curve points. Let $U$ be such a parameter.

What about the computational effort for an EPE process task on
the whole book? If we adopt a pure Monte Carlo strategy, we must distinguish
between the two main steps: 
\begin{enumerate}
\item simulation (and storage) of the underlying paths 
\item evaluation of the EE quantities. We omit for simplicity the last
EPE layer, since it is just an algebraic recombination of the EEs. 
\end{enumerate}
The first step implies a \emph{grid} of $G=K\cdot N\cdot U=4\cdot10^{7}$
points, which work as an input for the step 2. This one requires $NT=K\cdot N\cdot D=4\cdot10^{8}$
different tasks. By recalling definition of $EE_{k},$ each of these
tasks is a \emph{pricing process,} which very often  turns to be 
performed by a Monte Carlo algorithm with several thousands of simulation
steps.

Generally speaking, the evaluation of EPE by rough Monte Carlo
is $K\cdot N=O\left(10^{4}\right)$ more expensive than the usual
daily \emph{end-of-day} mark to market evaluation of the book.

Hence, we argue that the brutal Monte Carlo approach can not be a
satisfactory way for the EPE calculation.

For this reason, some banks are exploiting some innovative technological
approaches, such as the use of the \emph{graphical processing unit
}(GPU), instead of CPUs, to set up a parallel calculation system, with
some new programming languages such as NVIDIA, while some other banks have
been invested a lot buying \emph{grid computing} platforms.

We believe that an algorithmic based improvement could be more efficient
than the hardware innovation, or it can be combined with, and much
less expensive. In the derivatives pricing field, such a mixed approach
is well explained in \cite{pag5}.

Coming back to our credit exposure estimation, we try to figure how
to use the quantization technique. At a first level, we can distinguish
between \emph{path-dependent} derivatives and \emph{non} path-dependent
derivatives, in the following \emph{pd} and \emph{npd} for brevity.
We point out that this definition is different from the usual \emph{plain}
vs. \emph{exotic} derivatives. Among the non path-dependent derivatives
we include not only the plain European options, but also European
and American style options with exotic payoff, e.g. mixed digital continuous,
spread options, etc. In the \emph{npd} class, we will work with
the \emph{Asian} options, probably the most popular one.

For the \emph{npd} derivatives, the quantization for the EPE
simply reduces to set up the problem by selecting the parameters $\left(N_{k}\right)_{k=1,\ldots,K}$
for the quantization size at each time bucket $\left(t_{k}\right)$
and then to compute the EPE quantized approximation. We use the
optimal quantizer case, recalling that 
\[
S_{0}\exp\left[\left(r-\frac{\sigma^{2}}{2}\right)t+\sigma\cdot W_{t}\right] = 
S_{0}\exp\left[\left(r-\frac{\sigma^{2}}{2}\right)t+\sigma\cdot\sqrt{t}\cdot N\left(0,1\right)\right]\;.
\]

More formally, if we indicate by the exponent \emph{Q} the quantized
EPE, one easily gets

\begin{align}
EE_{k}^{Q} & =\sum_{i=1}^{N_{k}}MtM\left(t_{k},S\left(x_{i}^{k}\right)\right)^{+}p_{i}^{k}\label{eq_EE_Q}\;,\\
EPE^{Q} & =\frac{\sum EE_{k}^{Q}\cdot\Delta_{k}}{T}=\frac{\underset{k}{\sum}\Delta_{k}\left(\sum_{i=1}^{N_{k}}MtM\left(t_{k},S\left(x_{i}^{k}\right)\right)^{+}\cdot p_{i}^{k}\right)}{T}\;.\label{eq_EPE_Q}
\end{align}

Fig. \eqref{evolutionEEQ}, graphically explains the calculation procedure. In particular, the black point
on the left is the underlying level at time $t_{0}$. For each time
step $t_{k}$ and for each point of the quantizer $x_{i}^{k}$, a
MtM is calculated and it is weighted by the probability $p_{i}^{k}.$
Hence $EE^{Q}$ and $EPE^{Q}$ straightly follow.

\begin{figure}[H]\label{evolutionEEQ}
\centering{}\includegraphics[scale=0.6]{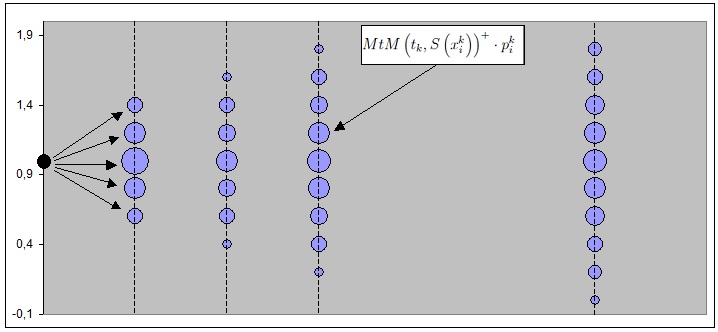} \protect\protect\caption{{\footnotesize{}{}{evolution of $EE^{Q}$ at every time bucket $t_{k}.$}}}
\end{figure}

Concerning the theoretical properties of such an approximation, we
provide a useful result, which can be easily applied to the case of
European style options.

\begin{proposition}\label{CCR BlackScholes} Let us consider an option
in the Black and Scholes setting, with $d=1$ and suppose that the
pricing function $\varphi\left(\cdot\right)$ is Lipschitz continuous or
continuously differentiable with a Lipschitz continuous differential. 

Moreover, let us define $D_{N}^{EPE}:=\left|EPE^A-EPE^Q\right|^2,$
where $EPE^{A}$ is the Adjusted Expected Positive Exposure, defined
in eq. \eqref{epeA}, and $EPE^{Q}$ is the quantized Expected Positive
Exposure, defined in eq. \eqref{eq_EPE_Q}. Then, we have 
\begin{align*}
D_{N}^{EPE} & \propto N^{-2}\cdot K^{-1},\mbox{ if \ensuremath{\varphi}Lipschitz continuous}\;,\\
D_{N}^{EPE} & \propto N^{-4}\cdot K^{-1},\mbox{ if \ensuremath{\varphi}cont. differentiable with Lipschitz cont. differential}\;.
\end{align*}
\end{proposition}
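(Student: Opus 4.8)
The plan is to reduce the global $EPE$ error to a sum of one-dimensional, per-bucket quantization errors, to control each of these by the distortion estimates recorded in Section~\ref{q_review}, and finally to recombine them across the $K$ time buckets. First I would write, using the Riemann-sum form of \eqref{epeA} over the same bucket array $\mathbf{B}^{T,K}$ so that the time discretization is not itself the object under study,
\[
EPE^{A}-EPE^{Q}=\frac{1}{T}\sum_{k=1}^{K}a_{k}\,\Delta_{k},\qquad a_{k}:=EE_{k}^{A}-EE_{k}^{Q}.
\]
At each bucket the relevant expectation is $EE_{k}=\mathbb{E}[h_{k}(X_{k})]$ with $X_{k}\sim\mathcal{N}(0,1)$ and $h_{k}(x):=MtM(t_{k},S(x))^{+}$, where $S(x)=S_{0}\exp[(r-\tfrac{\sigma^{2}}{2})t_{k}+\sigma\sqrt{t_{k}}\,x]$, and the quantized value $EE_{k}^{Q}$ of \eqref{eq_EE_Q} is exactly $\mathbb{E}[h_{k}(\widehat{X}_{k})]$. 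For a single European call or put the fair value is nonnegative, so the positive part is redundant and $h_{k}$ inherits the regularity of $\varphi$; since the option delta is bounded, the Lipschitz, resp.\ $C^{1}$-with-Lipschitz-derivative, hypothesis transfers to $h_{k}$ as a function of the driving Gaussian (the only delicate point being the exponential growth of $S(x)$, which is absorbed by the Gaussian tails in the non-asymptotic Zador estimate).

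Next I would bound each $a_{k}$ by the appropriate distortion rate. In the Lipschitz case, \eqref{lip1} together with the one-dimensional Zador rate $\|X_{k}-\widehat{X}_{k}\|_{2}^{2}=O(N_{k}^{-2})$ (Remark~\ref{remark5}) gives $|a_{k}|\le[h_{k}]_{\mathrm{Lip}}\,\|X_{k}-\widehat{X}_{k}\|_{2}=O(N_{k}^{-1})$, hence $a_{k}^{2}=O(N_{k}^{-2})$. In the smoother case I would instead exploit stationarity of the optimal quantizer, $\mathbb{E}[X_{k}\mid\widehat{X}_{k}]=\widehat{X}_{k}$: a first-order Taylor expansion of $h_{k}$ about $\widehat{X}_{k}$ has a vanishing linear term, so only the quadratic remainder survives and $|a_{k}|\le\tfrac{1}{2}[Dh_{k}]_{\mathrm{Lip}}\,\|X_{k}-\widehat{X}_{k}\|_{2}^{2}=O(N_{k}^{-2})$, i.e.\ $a_{k}^{2}=O(N_{k}^{-4})$. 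This is the mechanism by which the Lipschitz-derivative assumption upgrades the rate from $N^{-2}$ to $N^{-4}$, and it is also where the redundancy of the positive part for a single option is essential, since $(\cdot)^{+}$ would otherwise introduce a kink and destroy the differentiability needed for the Taylor step.

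Finally I would combine the buckets. Applying Cauchy--Schwarz with the weights $\Delta_{k}$ to the representation above yields
\[
D_{N}^{EPE}=|EPE^{A}-EPE^{Q}|^{2}\le\frac{1}{T}\sum_{k=1}^{K}a_{k}^{2}\,\Delta_{k},
\]
which for equally spaced buckets $\Delta_{k}=T/K$ and common size $N_{k}=N$ becomes $\frac{1}{K}\sum_{k=1}^{K}a_{k}^{2}$; inserting the per-bucket orders from the previous step then produces the claimed $N^{-2}K^{-1}$, resp.\ $N^{-4}K^{-1}$, scalings.

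The main obstacle is precisely this factor $K^{-1}$. The prefactor $1/K$ is free from Cauchy--Schwarz, but a term-by-term bound of $\sum_{k}a_{k}^{2}$ reintroduces a compensating factor $K$ and leaves only $N^{-2}$, resp.\ $N^{-4}$; retaining the genuine gain requires $\sum_{k}a_{k}^{2}$ to remain of order $N^{-2}$ (resp.\ $N^{-4}$) rather than $KN^{-2}$. One would like to obtain this from an averaging, central-limit-type cancellation across buckets, so that averaging over $K$ buckets shrinks the squared error by the factor $1/K$. I would emphasize, however, that for a convex payoff the quantization is a systematic lower bound (cf.\ \eqref{convex}), so the errors $a_{k}$ are one-signed and cannot cancel through sign alone; any $K^{-1}$ improvement must then stem from the spread of their magnitudes across buckets rather than from cancellation. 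Making this last point rigorous is the delicate step of the argument, and it is why the statement is phrased with $\propto$ rather than with an explicit inequality.
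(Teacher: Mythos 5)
Your decomposition, your per-bucket estimates, and your appeal to Zador's theorem coincide with the paper's own proof: the paper likewise writes $EPE^{A}-EPE^{Q}=\frac{1}{T}\sum_{k}\Delta_{k}a_{k}$ with $a_{k}:=EE_{k}^{A}-EE_{k}^{Q}$, bounds each $a_{k}$ via \eqref{lip1} in the Lipschitz case and via \eqref{lip2} in the smooth case (the latter being exactly your stationarity-plus-Taylor mechanism), and then uses $\Delta_{k}=O(T/K)$ together with $\left\Vert X_{k}-\widehat{X_{k}}\right\Vert _{2}^{2}=O\left(N^{-2}\right)$. The two arguments separate only at the recombination step, which is precisely the step you single out as the obstacle. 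There the paper asserts, in its inequality \eqref{eq1}, that $\left|\sum_{k}\Delta_{k}a_{k}\right|^{2}\leq\sum_{k}\Delta_{k}^{2}a_{k}^{2}$ with no factor of $K$; summing $K$ terms of size $O\left(T^{2}K^{-2}N^{-2}\right)$ then delivers the advertised $K^{-1}N^{-2}$ (and analogously $K^{-1}N^{-4}$). That inequality is false in general: the correct Cauchy--Schwarz bound is $\left(\sum_{k}\Delta_{k}a_{k}\right)^{2}\leq K\sum_{k}\Delta_{k}^{2}a_{k}^{2}$, and the factor $K$ cannot be discarded, because whenever the $a_{k}$ share a sign the cross terms $\sum_{i\neq j}\Delta_{i}\Delta_{j}a_{i}a_{j}$ are nonnegative and \eqref{eq1} reverses. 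Your convexity remark shows this is not a hypothetical worry but the generic situation here: for a call, \eqref{convex} forces $a_{k}\geq0$ for every $k$, so $\frac{1}{T}\sum_{k}\Delta_{k}a_{k}$ is a weighted average (weights summing to one) of one-signed terms of order $N^{-1}$ (resp. $N^{-2}$), and its square is of order $N^{-2}$ (resp. $N^{-4}$) with no decay in $K$ whatsoever.

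So the gap you flagged --- you can obtain $N^{-2}$, resp. $N^{-4}$, but not the extra $K^{-1}$ --- is not a missing idea on your side; it is the exact point at which the paper's own proof is unsound. What the shared portion of the two arguments actually establishes is $D_{N}^{EPE}=O\left(N^{-2}\right)$ in the Lipschitz case and $D_{N}^{EPE}=O\left(N^{-4}\right)$ in the Lipschitz-differential case, uniformly in $K$, which is what your proposal proves. Recovering the stated $K^{-1}$ would require either per-bucket errors that shrink as $K$ grows (they do not: each $a_{k}$ is a fixed one-dimensional quantization error, independent of how many other buckets exist) or sign cancellation across buckets, which your own argument rules out for convex payoffs. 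Your write-up is therefore correct as far as it goes, and its honest accounting identifies a real defect in the paper's proof of the $K^{-1}$ factor rather than a defect in your reasoning.
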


\begin{proof} By definition, we have 
\[
\left|EPE^{A}-EPE^{Q}\right|^{2}=\left|\frac{\sum EE_{k}^{A}\Delta_{k}}{T}-\frac{\sum EE_{k}^{Q}\Delta_{k}}{T}\right|^{2}=\frac{1}{T^{2}}\left|\sum\Delta_{k}\cdot\left(EE_{k}^{A}-EE_{k}^{Q}\right)\right|^{2}\;,
\]

hence rearranging the terms and  recalling that the CCR is effective
just for the \emph{buy} side position, we can skip the positive part
obtaining 

\begin{align}
\left|\sum\Delta_{k}\left(EE_{k}^{A}-EE_{k}^{Q}\right)\right|^{2} & =\left|\sum\Delta_{k}\cdot\left(\mathbf{E}_{\mathbf{P}}\left[MtM\left(t_{k},S_{k}\right)\right]-\mathbf{E}_{\mathbf{\widehat{P}}}\left[MtM\left(t_{k},S_{k}\right)\right]\right)\right|^{2}\nonumber \\
 & \leq\sum\Delta_{k}^{2}\left(\mathbf{E}_{\mathbf{P}}\left[MtM\left(t_{k},S_{k}\right)\right]-\mathbf{E}_{\mathbf{\widehat{P}}}\left[MtM\left(t_{k},S_{k}\right)\right]\right)^{2}\;.\label{eq1}
\end{align}

Moreover, we have 
\[
\left|\sum\Delta_{k}\left(EE_{k}^{A}-EE_{k}^{Q}\right)\right|^{2}\leq\left(\sum\Delta_{k}\left|\mathbf{E}_{\mathbf{P}}\left[MtM\left(t_{k},S_{k}\right)\right]-\mathbf{E}_{\mathbf{\widehat{P}}}\left[MtM\left(t_{k},S_{k}\right)\right]\right|\right)^{2}\;.
\]

In a more explicit fashion, let us work on the single element $k$
of the summation, i.e. $\left(\mathbf{E}_{\mathbf{P}}\left[MtM\left(t_{k},S_{k}\right)\right]-\mathbf{E}_{\mathbf{\widehat{P}}}\left[MtM\left(t_{k},S_{k}\right)\right]\right).$
If we consider $MtM\left(t_{k},S_{k}\right)=\varphi\left(t_{k},S_{k}\left(X\right)\right)$
as the function appearing in eq. \eqref{lip1} and eq. \eqref{lip2},
we get, respectively,

\begin{align}
\left(\mathbf{E}_{\mathbf{P}}\left[MtM\left(t_{k},S_{k}\right)\right]-\mathbf{E}_{\mathbf{\widehat{P}}}\left[MtM\left(t_{k},S_{k}\right)\right]\right)^{2} & \leq KL_{f,k}^{2}\cdot\left\Vert X_{k}-\widehat{X_{k}}\right\Vert _{1}^{2}\label{stima1}\\
\left|\mathbf{E}_{\mathbf{P}}\left[MtM\left(t_{k},S_{k}\right)\right]-\mathbf{E}_{\mathbf{\widehat{P}}}\left[MtM\left(t_{k},S_{k}\right)\right]\right| & \leq KL_{Df,k}^{2}\cdot\left\Vert X_{k}-\widehat{X_{k}}\right\Vert _{2}^{2}\;.\label{stima2}
\end{align}

By replacing eq. \eqref{stima1} and eq. \eqref{stima2} in eq. \eqref{eq1},
we obtain 
\begin{align}
\left|\sum\Delta_{k}\left(EE_{k}^{A}-EE_{k}^{Q}\right)\right|^{2} & \leq\sum\Delta_{k}^{2}KL_{f,k}^{2}\cdot\left\Vert X_{k}-\widehat{X_{k}}\right\Vert _{1}^{2}\;,\label{stima3}\\
\left|\sum\Delta_{k}\left(EE_{k}^{A}-EE_{k}^{Q}\right)\right|^{2} & \leq\sum\Delta_{k}^{2}KL_{Df,k}^{2}\cdot\left\Vert X_{k}-\widehat{X_{k}}\right\Vert _{2}^{4}\;.\label{stima4}
\end{align}

Let us consider eq. \eqref{stima3}, the calculation for eq. \eqref{stima4}
being the same.

Set $\overline{KL}$ equal to the mean of $\left(KL_{\cdot,k}\right),$
for all $k,$ and suppose that the mesh $t_{k}$ is regular enough,
i.e. we require that $\mathrm{lim_{K}}\Delta_{k}=0,\, O\left(\Delta_{k}\right)=K^{-1}\;\forall\, k.$
Thanks to the Zador Theorem and taking $N\rightarrow +\infty$, we have

\begin{align*}
\frac{1}{T^{2}}\cdot\left|\sum\Delta_{k}\left(EE_{k}^{A}-EE_{k}^{Q}\right)\right|^{2} & \leq\frac{1}{T^{2}}\sum\Delta_{k}^{2}KL_{f,k}^{2}\left\Vert X_{k}-\widehat{X_{k}}\right\Vert _{2}^{2}\\
 & \propto\frac{1}{T^{2}}\cdot K\cdot\overline{KL}\cdot\frac{T^{2}}{K^{2}}\cdot N^{-2}\;.
\end{align*}

By simplifying, we get the result. Similar calculations provide the
result  when the pricing function is assumed to be continuously differentiable with Lipschitz continuous differential. 
For a more abstract setting, see \cite[Sec.2.4]{pag2}. \end{proof}

\begin{remark}\label{remark application} Let us discuss the hypothesis
under which the result holds. The central role is played by the function
$\varphi\left(x,\cdot\right)$ as a function of the Brownian motion
$W_{t}$, that is, of the quantity $x\cdot\sqrt{t}$, $x$ sampled
from the $N\left(0,1\right).$ We recall in fact that the usefulness
of the quantization is to infer the properties of the approximation
in the specific problem, starting from the Gaussian optimized discretization.
As an example, for a \emph{put} option the pricing function is bounded,
Lipschitz continuous and twice differentiable, since the Black-Scholes
formula is $C^{\infty}.$ Then, the above proposition holds.

For a \emph{call} style option, the \emph{convexity} properties easily
holds, hence the quantization gives us a lower bound to the EPE estimation.
\end{remark}

\begin{remark}\label{remark complexity} If we consider the pricing
function $\varphi\left(\cdot\right)$ as the elementary unit of our
EPE computational work-flow, the computational complexity of
the quantized approach is $\underset{k}{\sum N_{k}},$ to be compared
with the global number of Monte Carlo scenarios simulations. \end{remark}

For path-dependent derivatives, for each time $t_{k},$ at least in
a \emph{discrete sampling} sense, one needs the whole path of the
underlying. Hence, the above approach is not satisfactory, as the
pricing function depends not only on the current level $S_{k},$ but
also on some functions, e.g. the average, min, max, etc., of the
underlying level until $t_{k}$. Latter task can be efficiently studied by an approach like the
\emph{PDS} one, as in figure \ref{PDS_DJS}.

Let $N_{k}$ be a positive integer for the \emph{quantization}, and
$q_{k}\left(\mathcal{R}\right)=\left(x_{1},x_{2,}...,x_{N_{k}}\right)$
is the quantizer of size $N_{k},$ namely the random variable $\widehat{X_{k}}=q\left(X_{k}\right)$
that maps the random variable to an optimal discrete version. If we
refer to the Black-Scholes model, we want to quantize at each step
the Brownian motion $W_{t}$ that generates the log-normal underlying
process. We recall that $W_{t}\sim N\left(0,t\right)$ is a normal
centered random variable and that $W_{t}-W_{s}\sim N\left(0,t-s\right).$

Again, we define $C\left(x_{i}\right)$ as the $i-$th \emph{Voronoi
tessel} such that 
\[
C\left(x_{i}\right):=\left\{ y\in\mathscr{\mathcal{\mathfrak{\mathbb{R^{\mathit{d}}}}\mathit{\mathrm{:}\left|y-x_{i}\right|<\left|y-x_{j}\right|\forall j\neq i}}}\right\} ,
\]
with the so-called \emph{nearest neighborhood principle.} A set of
probability masses vectors is assigned to the $N_{k}-$tuple, let
be $\mathbf{p^{k}}=\left(p_{1}^{k},p_{2}^{k},\ldots,p_{N_{k}}^{k}\right),$
where $p_{i}=\mathbb{P}\left(C\left(x_{i}\right)\right)$ under the
original probability $\mathcal{\mathbb{P_{X}}},$ for all $i=1,\ldots,N_{k}.$
The following questions naturally arise: how and where to use the quantizers for the $EE_{k}$ calculation ? 
The \emph{quantization tree} is a discrete space, discrete time process,
defined by
\begin{align}
p_{i}^{k} & =\mathbf{\mathbb{P}\mathrm{\left(\mathit{\widehat{X}_{k}}=x_{i}^{k}\right)}}=\mathbb{P}\left(X_{k}\in C_{i}\left(x^{k}\right)\right)\label{prob_voronoi} \;,\\
\pi_{ij}^{k} & =P\left(\widehat{X}_{k+1}=x_{j}^{k+1}|\widehat{X}_{k}=x_{i}^{k}\right)=P\left(X_{k+1}\in C_{j}\left(x^{k+1}\right)|X_{k}\in C_{i}\left(x^{k}\right)\right)\nonumber \\
 & =\frac{P\left(X_{k+1}\in C_{j}\left(x^{k+1}\right),X_{k}\in C_{i}\left(x^{k}\right)\right)}{p_{i}^{k}}\label{trans_prob}
\end{align}

The following theoretical result allows us to explicitly solve the
transition probability formula \eqref{qfunction}, by recalling that
$X_{k}$ is the original Brownian motion sampled at a given time $t_{k}.$

\begin{proposition}\label{prop_trans_prob} Let us use denote by
$\Delta_{k}:=t_{k+1}-t_{k}$ the time space between two consecutive
time buckets and by $\phi$ the density of the $N\left(0,1\right)$
random variable. Furthermore, let us indicate by $U_{k},U_{k+1}$, resp. by
 $L_{k},L_{k+1}$, the upper, resp. lower, bounds 
of the $1-$dimensional tessels $C_{i}\left(x^{k}\right)$ and $C_{j}\left(x^{k+1}\right).$

Then, the transition probability $\pi_{ij}^{k}$ assumes the following
expression 
\begin{equation}
\pi_{ij}^{k}=\int_{C_{j,k+1}}f\left(x_{k+1}|C_{i,k}\right)\de x_{k}=\frac{\intop_{L_{k}}^{U_{k}}\left(\intop_{L_{k+1}}^{U_{k+1}}\phi\left(\frac{\left(x-y\right)}{\sqrt{\Delta_{k}}}\right)\de x\right)\phi\left(\frac{y}{\sqrt{t_{k}}}\right)\cdot dy}{p_{i}^{k}}\;.\label{tr}
\end{equation}
\end{proposition}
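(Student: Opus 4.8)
The plan is to start from the definition of conditional probability already recorded in the second line of eq.~\eqref{trans_prob}, namely
\[
\pi_{ij}^{k}=\frac{P\left(X_{k+1}\in C_{j}\left(x^{k+1}\right),\,X_{k}\in C_{i}\left(x^{k}\right)\right)}{p_{i}^{k}}\;,
\]
so that the only object to be computed is the joint probability in the numerator. Since $d=1$, every Voronoi cell is an interval, hence the events $\{X_{k}\in C_{i}\}$ and $\{X_{k+1}\in C_{j}\}$ are exactly $\{L_{k}\le y\le U_{k}\}$ and $\{L_{k+1}\le x\le U_{k+1}\}$, and the joint probability becomes the integral of the joint density of $(X_{k},X_{k+1})=(W_{t_{k}},W_{t_{k+1}})$ over the rectangle $[L_{k},U_{k}]\times[L_{k+1},U_{k+1}]$. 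This reduction is purely a matter of reading off the tessellation bounds, so it is routine.

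The decisive step, and the one I expect to carry the whole argument, is the factorization of the joint density, which rests on the independent–increments (Markov) property of Brownian motion recalled just before the statement. Conditionally on $W_{t_{k}}=y$, the value $W_{t_{k+1}}$ equals $y$ plus an increment $W_{t_{k+1}}-W_{t_{k}}\sim N(0,\Delta_{k})$ independent of $W_{t_{k}}$; hence the conditional law of $X_{k+1}$ given $X_{k}=y$ is $N(y,\Delta_{k})$. Writing the marginal $X_{k}\sim N(0,t_{k})$ times this transition kernel gives
\[
f_{(W_{t_{k}},W_{t_{k+1}})}(y,x)=\frac{1}{\sqrt{t_{k}}}\,\phi\!\left(\frac{y}{\sqrt{t_{k}}}\right)\cdot\frac{1}{\sqrt{\Delta_{k}}}\,\phi\!\left(\frac{x-y}{\sqrt{\Delta_{k}}}\right)\;,
\]
with $\phi$ the standard normal density. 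Everything downstream is a consequence of this product form.

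To finish, I would substitute this factorization into the rectangle integral and apply Fubini: holding $y$ fixed in the outer integral over $[L_{k},U_{k}]$ and integrating $x$ over $[L_{k+1},U_{k+1}]$ in the inner one reproduces precisely the iterated integral in the numerator of eq.~\eqref{tr}, and dividing by $p_{i}^{k}$ yields the claim. I would close with a cautionary remark: because $\phi$ is declared to be the \emph{standard} normal density, the normalizing factors $1/\sqrt{t_{k}}$ and $1/\sqrt{\Delta_{k}}$ must be carried along for the expression to be a genuine probability density, so a strictly correct statement of eq.~\eqref{tr} should display them. Beyond this bookkeeping, no real difficulty arises, since in one dimension the cells are intervals and the computation is nothing more than Fubini applied to the Gaussian joint density in product form.
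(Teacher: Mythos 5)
Your proof is correct and takes essentially the same route as the paper's: both reduce $\pi_{ij}^{k}$ to a conditional probability, invoke the independence and stationarity (scaling) of Brownian increments to identify the transition kernel of $X_{k+1}$ given $X_{k}=y$ as $N\left(y,\Delta_{k}\right)$, and then integrate over the one-dimensional tessels, with your write-up merely making explicit the joint-density factorization and the Fubini step that the paper compresses into ``the result follows by extending such a fact to tessels.'' Your closing caveat is also well taken: since $\phi$ is declared to be the \emph{standard} normal density, the displayed formula \eqref{tr} (and the corresponding expression for $p_{i}^{k}$) omits the normalizing Jacobians $1/\sqrt{t_{k}}$ and $1/\sqrt{\Delta_{k}}$, so your version carrying these factors is the dimensionally correct statement.
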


\begin{proof} The result is a straightforward calculation, indeed let us start
considering a more specific problem, namely we aim at calculate $P\left(X_{k+1}\in C_{j}\left(x^{k+1}\right)|X_{k}=y,\: y\in C\left(x^{k}\right)\right).$

For  given $x\in C_{j}\left(x^{k+1}\right)$, $y\in C_{j}\left(x^{k}\right)$,
by recalling the scaling property and the independence of the Brownian
motion increments, we easily get $P\left(\de x\right)=\phi\left(\frac{\left(x-y\right)}{\sqrt{\Delta}_{k}}\right)\cdot \de x$, and
the result follows by extending such a fact
to tessels $C_{i},C_{j}$.\end{proof}

The picture below shows the practical features of the formula eq.
\eqref{tr}.

\begin{figure}[H]
\centering{}\includegraphics[scale=0.6]{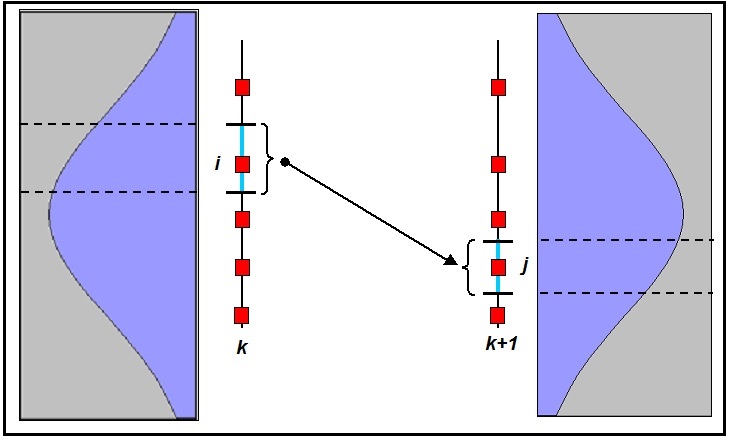} \protect\protect\caption{{\footnotesize{}{}{a graphical representation of the transition
probability function}}}
\end{figure}

\begin{remark}\label{MC} Even if the proposition comes from elementary
probability, this result is a useful improvement to the procedure
in Pag\`es et al (2009), where a Monte Carlo approach for the $\pi_{ij}^{k}$
was suggested. \end{remark}

\begin{remark}\label{cc} From a computational complexity point of
view, we observe that the above coefficients $\pi_{ij}^{k}$ can be
calculated off line, once and for all, given the time discretization
parameter $K$ and the chosen granularities $\left\{ N_{1},N_{2},...,N_{K}\right\} .$
\end{remark}

Despite the reduction in the number of evaluations that the quantization
approach guarantees, the possible paths of the quantization tree,
say $NP,$ could be too many from a theoretical perspective. In fact,
they amount to $NP=\prod_{k}N_{k}$. If we set, as usual, $K=O\left(10^{1}\right)$
and $N_{k}=O\left(10^{2}\right),$ then $NP=O\left(10^{20}\right),$
which seems to be intractable for practical purposes. Fortunately,
this does not occur, from a practical point of view. Many paths have
a negligible probability, as very often we have $\pi_{ij}^{k}\simeq0,$
so we can skip a large fraction of the combinatorial cases by some
heuristic \emph{a priori} rule that avoids calculation depending from
the distance $\left|\left(\frac{x_{j}^{k+1}-x_{i}^{k}}{\sqrt{\varDelta_{k}}}\right)\right|$.

\section{The numerical Application}

\label{num}

In this section we will give an application of quantization method
in CCR with respect to a portfolio consisting of a bank account and
one risky asset which is the underlying of a European type option.
We suppose
that the dynamics of the underlying $S_{t}:=\left\{ S_{t}\right\} _{t\in[0,T]}$,
for some maturity time $T\in\mathbb{R}^{+}$, is given accordingly
to a geometric Brownian motion, namely 
\begin{equation}
\de S_{t}=rS_{t}\de t+\sigma S_{t}\de W_{t}\;,\label{bs_eq}
\end{equation}
where $r$ is the risk free interest rate, e.g. associated to a bank account, $\sigma$ is the volatility parameter characterizing the underlying behaviour, while $W_{t}:=\{W_{t}\}_{t\,\in\,[0,T]}$ is a $\mathbb{R}-$
valued Brownian motion on the filtered probability space $\left(\Omega,\mathcal{F},\mathcal{F}_{t},\mathbb{P}\right)$,
$\left\{ \mathcal{F}_{t}\right\} _{t\in[0,T]}$ being the filtration
generated by $W_{t}$. We recall that the stochastic differential
equation \eqref{bs_eq} admits an explicit solution, see, e.g.,\cite[Ch.3]{Shreve},
given by 
\begin{equation}
S_{t}=S_{0}\exp\left\{ \sigma W_{t}+\left(r-\frac{\sigma^{2}}{2}\right)t\right\} \label{bs_sol} \;,
\end{equation}
$S_0$ being the initial value of the underlying $S_t$.

If we consider a European call option with strike price $K\in\mathbb{R}^{+}$ snd 
maturity time $T$, written on $S_t$ described as above, 
then its fair value $C_{eu}=C_{eu}(S_{0},r,K,\sigma,T)$,
with respect to the unique martingale equivalent measure, is given
by 
\begin{align}
C_{eu}(S_{0},r,K,\sigma,T) & :=e^{-rT}\mathbb{E}\left[\left(S_{T}-K\right)^{+}\right]\nonumber \\
 & =e^{-rT}\mathbb{E}\left[\left(S_{0}\exp\left\{ \left(r-\frac{\sigma^{2}}{2}\right)T+\sigma W_{T}\right\} -K\right)^{+}\right]\;,\label{eu_call}
\end{align}

with explicit solution given by the Black-Scholes formula.

Before enter into details about  a numerical application of our proposal, see Sec. \eqref{proposal}, to a concrete case, let
us underline some points. 
In practical applications, the computational challenges are very
often much harder than one believes from a theoretical perspective.
Referring this general principle to the CCR, the portfolio of derivatives
of a counterparty \emph{A} with \emph{B} may consist of several hundreds
of derivatives $j=1...J$, then the Mark-to-Market is given by $MtM^{A}\left(t\right)=\underset{j}{\sum}MtM_{j}^{A}\left(t\right).$
These derivatives could be \emph{bought} options, s\emph{old} options,
\emph{swaps} and so on. A collateral of value $V_{t}$ is usually
posted. Hence, at the current time, the exposure of the counterparty
\emph{A} to \emph{B} is given by 
\begin{equation} \label{collateral}
\left(\underset{j}{\sum}MtM_{j}^{A}\left(t\right)-V_{t}\right)^{+}\;,
\end{equation}

an  expression which is similar to the one describing a  \emph{call option} written on a
derivatives portfolio. In the CCR applications, one wants to check
several quantities related to the current exposure, such as EE,
EPE, PFE, and so on. In calculating the expected exposure 
of  quantities as in \eqref{collateral}, i.e. $EE_{j}\left(\cdot\right),$ because of non 
linearity, one can not calculate separately the single deal quantities, i.e.
the $EE_{j}\left(t_{k}\right)$, to aggregate them by summation in a second moment.
\emph{A fortiori,} in the PFE quantile framework, one can not infer
easily the quantile  by the specific quantiles.

It follows that all banks, as a general strategy, first calculate
a large set of  scenarios for the underlyings, \emph{coherently}
with respect to the considered probabilistic structure for it, and then they evaluate and store
a large set of \emph{MtM,} from which to pick any kind of statistics
and risk figures. In this field quantization can play a role as a
competitive methodology, particularly to what concerns saving computational costs. 
Nevertheless, since the CCR for a whole book
comes from the single deals computations, we aim to test at a {\it low}
level the quantization, in a plain vanilla context. In further research
we will move to 
Exotic deals as well as effective management of large portfolios, will be the subject of our next studies.

\subsection{Set-up and quantization strategy}

\label{num1}

For the market valuation of financial statements, the generation of
potential market scenarios is required. In Sec. \eqref{BaselEPESection}, we definded
two achievable approaches, namely the path-dependent simulations method
(PDS) and the direct jump to simulation data (DJS) technique: in first case one 
simulates a whole path-wise possible trajectory, while in the second each time point is directly computed.
 

More practically, we could apply the DJS approach by selecting a grid
size $N_{k}$ for each time $t_{k}$ and then using $K$ different
1-dimension quantizers $q_{k}\left({\bf {R}}\right)$.

Alternatively, by using the PDS approach, we can work just with one
$d$-dimension $q\left({\bf {R}^{d}}\right)$ quantizer of cardinality
$N$, hence each dimension works for one time step.

We choose the latter approach for our application. The steps are as
follows. 
\begin{enumerate}
\item Selection of the grid size $N$ and the dimension $d,$ according
respectively to the computational effort constraints and to the time
buckets cardinality $K,$ i.e. $d=K.$ 
\item By obvious dilatation, each point of the $(N\times K)$ grid of the
vector quantizer is mapped in order to get a proper to a Brownian
motion increment realization, i.e. 
\[
x_{i,k}\rightarrow x_{i,k}\sqrt{\Delta_{k}}=\Delta\widetilde{W},\;\forall\, i\,\in\,\{1,\ldots,N\},\, k\,\in\,\{1,\ldots,d\}\,.
\]

\item The above increments are inserted in the Black-Scholes diffusion to
get $N$ ``possible'' underlying paths $S_{t_{k}}.$ 
\item Payoff, MtMs and all the related quantities are calculated,
by using the probability masses $p_{i}.$ 
\end{enumerate}
Although we know that it is not possible to get an exhaustive comparison,
anyhow we try to make the exercise quite general, by choosing some
different parameter combinations, e.g. 
\begin{itemize}
\item spot price $(S_{0}):$ $90,100,110$ 
\item interest rate $r:$ $3\%$ 
\item volatility $\sigma:$ $15\%,25\%,30\%$ 
\item strike price $R:$ $100$ (we do not use the usual $K$ notation to
avoid confusion with the time buckets set $\left\{ t_{k}\right\} $) 
\item time to maturity $T:$ one year 
\end{itemize}
According to the choice of several banks to consider an increasing
sampling frequency over time, because of accuracy decreasing over
large horizons, we decide to set time buckets  in the following
range 
\[
\left\{ 0,\frac{1}{52},\frac{2}{52},\frac{3}{52},\frac{4}{52},\frac{2}{12},\frac{3}{12},\frac{6}{12},\frac{9}{12},1\right\} \;,
\]
namely, we are considering the first, the second, the third and the
fourth week and then the second, the third, the sixth and the twelfth
month of the year.

\subsection{The single option situation}

\label{num2}

In order to evaluate the Expected Positive Exposure (EPE), we compare
the quantization approach with standard Monte Carlo method. We distinguish
several cases, depending on the \emph{moneyness}, i.e. the relative position of $S_{t}$ versus the strike
price $K$ of the considered call option, and volatility parameters.

Each case will be analyzed with respect to the Monte
Carlo-Sobol approach, see, e.g., \cite{Caflish}, with $N=10^{3},$
as well as considering the Monte Carlo simulations (MC) and the quantization grids (Q),
with $N=10^{3}.$

Concerning the choice of the benchmark, let us note that, in the Black-Scholes
setting, in order to price a European call option, we work in a risk
neutral context where the drift of the geometric Brownian motion has
to be equal to the risk free rate. Under such an assumption, the Expected
Exposure (EE) admits a closed form, i.e.

\begin{equation}
EE_{k}^{A}=\mathbf{E}_{\mathbf{P}}\left[MtM\left(t_{k},S_{k}\right)^{+}\right] 
= MtM\left(t_{0},S_{0}\right)\cdot\mathrm{exp}\left(t_{k}-t_{0}\right)\;,\label{benchmark}
\end{equation}

which is implied by  the\emph{ martingale property} of the discounted fair
value, see \cite[Ch.5]{Shreve} for further details.

In a more general setting, the simple approach characterized by eq.\eqref{benchmark} can not be applied, because 
of more involved payoffs. Moreover, the Mark-to-Market function does not exist
in an analytical form and the drift can assume values different from
the risk free rate $r.$ Besides, we are often interested in
calculating a measure of the possible worst exposure with a certain
level of confidence. Such a measure is expressed in terms of $p-$percentile
of the exposure's distribution, the so-called Potential Future
Exposure (PFE), defined in eq. \eqref{PFE}. 

As already stressed, since we  aim at testing the efficiency
of quantization techniques, we refer here to a simple problem, while the case of more complex financial models
is the subject of our next research.

In what follows we always consider a $(D+1)\times1$ matrix, $D$
being the number of time buckets taken into account, hence $D=9,$
since we start from $t_{0}=0.$ Each matrix entry represents the value
of the Expected Exposure (EE), $EE_{k}:=\frac{1}{N}\sum_{n=1}^{N}MtM\left(t_{k},S_{k,n}\right)^{+}$,
obtained by applying eq. \eqref{eq_EE_Q}. Last row gives the value
of the Expected Positive Exposure (EPE), $EPE:=\frac{\sum_{k=1}^{K}EE_{k}\cdot\Delta_{k}}{T},$
calculated according to eq. \eqref{eq_EPE_Q}.

The efficiency evaluation of exploited procedures
requires a comparison between the value obtained by simulations and a benchmark.
In the case of quantization approach, such a comparison is given by
the evaluation of the (percent) relative error $\varepsilon$ with
respect to the Black-Scholes price obtained using formula
\eqref{benchmark}. As regards the Monte Carlo approach, the analyzed
quantity is the (percent) relative standard error (RSD). By the {\it Law of large numbers}, 
it is well known that the Monte Carlo approach always converges
to the true value, hence its standard deviation is more informative
than the single execution error. The {\it numerical} calculation in
the tables stands for the numerical integration of formula \eqref{eeA},
i.e. the expected value of the possibles MtMs over the different
underlying prices. The integration has been done considering a simple rectangle
scheme, with $10^{3}$ points. Finally, we also tested the Monte Carlo-Sobol
technique, based on binary digits that well fill the $[0,1]$ interval.
To summarize, all the different techniques were compared with the
same number of points and avoiding too sophisticated versions, to
keep the comparison as fair as possible.

Tables \ref{tab:ITM15}, \ref{tab:ITM25} and \ref{tab:ITM30}, contain
numerical results in the ITM case with $S_{0}=110,$ while tables \ref{tab:ATM15},\ref{tab:ATM25} and \ref{tab:ATM30},
refer to the ATM case with $S_{0}=100$, finally tables \ref{tab:OTM15},\ref{tab:OTM25}, and \ref{tab:OTM30},
report the performances in the OTM case, with $S_{0}=90.$ 

{\scriptsize{}{}{ } }
\begin{table}[H]
\begin{centering}
{\footnotesize{}{}{{}}}{\scriptsize{}{}}%
\begin{tabular}{|c|c|cc|cc|cc|cc|}
\cline{2-10} 
\multicolumn{1}{c|}{} & {\footnotesize{}{}{{Analytic } }}{\scriptsize{}{}  } & \multicolumn{2}{c|}{{\footnotesize{}{}{{Numerical}}}} & \multicolumn{2}{c|}{{\footnotesize{}{}{{Quantization}}}} & \multicolumn{2}{c|}{{\footnotesize{}{}{{Monte Carlo}}}} & \multicolumn{2}{c|}{{\footnotesize{}{}{{MC-Sobol}}}}\tabularnewline
\hline 
{\footnotesize{}{}{{t } }}{\scriptsize{}{}  } & {\footnotesize{}{}{{EE } }}{\scriptsize{}{}  } & \multicolumn{1}{c}{{\footnotesize{}{}{{EE}}}} & {\footnotesize{}{}{{$\varepsilon$ } }}{\scriptsize{}{}  } & \multicolumn{1}{c}{{\footnotesize{}{}{{EE}}}} & {\footnotesize{}{}{{$\varepsilon$ } }}{\scriptsize{}{}  } & \multicolumn{1}{c}{{\footnotesize{}{}{{EE}}}} & {\footnotesize{}{}{{RSD } }}{\scriptsize{}{}  } & \multicolumn{1}{c}{{\footnotesize{}{}{{EE}}}} & {\footnotesize{}{}{{$\varepsilon$}}}\tabularnewline
\hline 
{\scriptsize{}{}1w  } & {\scriptsize{}{}14,711  } & {\scriptsize{}{}14,710  } & {\scriptsize{}{}-0,007\%  } & {\scriptsize{}{}14,711  } & {\scriptsize{}{}0,000\%  } & {\scriptsize{}{}14,649  } & {\scriptsize{}{}0,004\%  } & {\scriptsize{}{}14,710  } & {\scriptsize{}{}-0,010\% }\tabularnewline
{\scriptsize{}{}2w  } & {\scriptsize{}{}14,719  } & {\scriptsize{}{}14,717  } & {\scriptsize{}{}-0,007\%  } & {\scriptsize{}{}14,719  } & {\scriptsize{}{}0,000\%  } & {\scriptsize{}{}14,725  } & {\scriptsize{}{}0,006\%  } & {\scriptsize{}{}14,717  } & {\scriptsize{}{}-0,014\% }\tabularnewline
{\scriptsize{}{}3w  } & {\scriptsize{}{}14,726  } & {\scriptsize{}{}14,726  } & {\scriptsize{}{}-0,008\%  } & {\scriptsize{}{}14,727  } & {\scriptsize{}{}0,000\%  } & {\scriptsize{}{}14,815  } & {\scriptsize{}{}0,007\%  } & {\scriptsize{}{}14,725  } & {\scriptsize{}{}-0,012\%}\tabularnewline
{\scriptsize{}{}1m  } & {\scriptsize{}{}14,776  } & {\scriptsize{}{}14,734  } & {\scriptsize{}{}-0,008\%  } & {\scriptsize{}{}14,736  } & {\scriptsize{}{}0,000\%  } & {\scriptsize{}{}14,869  } & {\scriptsize{}{}0,008\%  } & {\scriptsize{}{}14,735  } & {\scriptsize{}{}-0,002\%}\tabularnewline
{\scriptsize{}{}2m  } & {\scriptsize{}{}14,813  } & {\scriptsize{}{}14,774  } & {\scriptsize{}{}-0,010\%  } & {\scriptsize{}{}14,775  } & {\scriptsize{}{}0,000\%  } & {\scriptsize{}{}15,003  } & {\scriptsize{}{}0,012\%  } & {\scriptsize{}{}14,775  } & {\scriptsize{}{}-0,005\%}\tabularnewline
{\scriptsize{}{}3m  } & {\scriptsize{}{}14,924  } & {\scriptsize{}{}14,811  } & {\scriptsize{}{}-0,011\%  } & {\scriptsize{}{}14,812  } & {\scriptsize{}{}0,000\%  } & {\scriptsize{}{}14,801  } & {\scriptsize{}{}0,014\%  } & {\scriptsize{}{}14,808  } & {\scriptsize{}{}-0,030\%}\tabularnewline
{\scriptsize{}{}6m  } & {\scriptsize{}{}15,036  } & {\scriptsize{}{}14,921  } & {\scriptsize{}{}-0,016\%  } & {\scriptsize{}{}14,924  } & {\scriptsize{}{}0,000\%  } & {\scriptsize{}{}14,916  } & {\scriptsize{}{}0,020\%  } & {\scriptsize{}{}14,924  } & {\scriptsize{}{}-0,001\%}\tabularnewline
{\scriptsize{}{}9m  } & {\scriptsize{}{}15,149  } & {\scriptsize{}{}15,033  } & {\scriptsize{}{}-0,019\%  } & {\scriptsize{}{}15,036  } & {\scriptsize{}{}0,000\%  } & {\scriptsize{}{}15,157  } & {\scriptsize{}{}0,026\%  } & {\scriptsize{}{}14,994  } & {\scriptsize{}{}-0,282\%}\tabularnewline
{\scriptsize{}{}1y  } & {\scriptsize{}{}15,149  } & {\scriptsize{}{}15,145  } & {\scriptsize{}{}-0,023\%  } & {\scriptsize{}{}15,149  } & {\scriptsize{}{}-0,001\%  } & {\scriptsize{}{}14,870  } & {\scriptsize{}{}0,031\%  } & {\scriptsize{}{}15,049  } & {\scriptsize{}{}-0,659\%}\tabularnewline
\hline 
{\scriptsize{}{}EPE  } & {\scriptsize{}{}14,970  } & {\scriptsize{}{}14,967  } & {\scriptsize{}{}-0,017\%  } & {\scriptsize{}{}14,970  } & {\scriptsize{}{}0,000\%  } & {\scriptsize{}{}14,951  } & {\scriptsize{}{}-0,125\%  } & {\scriptsize{}{}14,934  } & {\scriptsize{}{}-0,241\%}\tabularnewline
\hline 
\end{tabular}
\par\end{centering}{\scriptsize \par}

{\footnotesize{}{}{{\protect\protect\caption{{\footnotesize{}{}{{EPE: $10\%-$ITM European call. $\sigma=15\%$.\label{tab:ITM15}}}}}
}{\footnotesize \par}

{\footnotesize{}}}{\footnotesize \par}

{\footnotesize{}{}}}{\footnotesize \par}

{\footnotesize{}{}{} }{\scriptsize{}{}  }
\end{table}
{\scriptsize \par}

{\scriptsize{}{}}{\scriptsize \par}

{\footnotesize{}{}{{}}}{\footnotesize \par}

{\footnotesize{}{}{{}}} 
\begin{table}[H]
\begin{centering}
{\footnotesize{}{}{{}}}%
\begin{tabular}{|c|c|cc|cc|cc|cc|}
\cline{2-10} 
\multicolumn{1}{c|}{} & {\footnotesize{}{}{{Analytic }}}  & \multicolumn{2}{c|}{{\footnotesize{}{}{{Numerical}}}} & \multicolumn{2}{c|}{{\footnotesize{}{}{{Quantization}}}} & \multicolumn{2}{c|}{{\footnotesize{}{}{{Monte Carlo}}}} & \multicolumn{2}{c|}{{\footnotesize{}{}{{MC-Sobol}}}}\tabularnewline
\hline 
{\footnotesize{}{}{{t }}}  & {\footnotesize{}{}{{EE }}}  & \multicolumn{1}{c}{{\footnotesize{}{}{{EE}}}} & {\footnotesize{}{}{{$\varepsilon$ }}}  & \multicolumn{1}{c}{{\footnotesize{}{}{{EE}}}} & {\footnotesize{}{}{{$\varepsilon$ }}}  & \multicolumn{1}{c}{{\footnotesize{}{}{{EE}}}} & {\footnotesize{}{}{{RSD }}}  & \multicolumn{1}{c}{{\footnotesize{}{}{{EE}}}} & {\footnotesize{}{}{{$\varepsilon$}}}\tabularnewline
\hline 
1w  & 18,0448  & 18,0435  & -0,007\%  & 18,0447  & 0,000\%  & 17,9530  & 0,495\%  & 18,0427  & -0,012\%\tabularnewline
2w  & 18,0551  & 18,0538  & -0,008\%  & 18,0552  & 0,000\%  & 18,0637  & 0,693\%  & 18,0524  & -0,015\%\tabularnewline
3w  & 18,0656  & 18,0640  & -0,009\%  & 18,0656  & 0,000\%  & 18,2050  & 0,880\%  & 18,0628  & -0,015\%\tabularnewline
1m  & 18,0760  & 18,0743  & -0,009\%  & 18,0760  & 0,000\%  & 18,2792  & 0,996\%  & 18,0754  & -0,004\%\tabularnewline
2m  & 18,1248  & 18,1225  & -0,013\%  & 18,1247  & 0,000\%  & 18,4457  & 1,420\%  & 18,1234  & -0,007\%\tabularnewline
3m  & 18,1701  & 18,1673  & -0,015\%  & 18,1701  & 0,000\%  & 18,1189  & 1,764\%  & 18,1627  & -0,041\%\tabularnewline
6m  & 18,3069  & 18,3027  & -0,023\%  & 18,3069  & 0,000\%  & 18,2130  & 2,568\%  & 18,3091  & 0,012\%\tabularnewline
9m  & 18,4447  & 18,4392  & -0,030\%  & 18,4447  & 0,000\%  & 18,6685  & 3,359\%  & 18,3983  & -0,252\%\tabularnewline
1y  & 18,5836  & 18,5763  & -0,036\%  & 18,5836  & 0,000\%  & 18,2320  & 3,985\%  & 18,5664  & -0,090\%\tabularnewline
\hline 
EPE  & 18,3638  & 18,3590  & -0,025\%  & 18,3638  & 0,000\%  & 18,33791  & -0,140\%  & 18,34755  & -0,088\%\tabularnewline
\hline 
\end{tabular}
\par\end{centering}

{\footnotesize{}{}{{\protect\protect\caption{{\footnotesize{}{}{{EPE: $10\%-$ITM European call. $\sigma=25\%$.\label{tab:ITM25}}}}}
}{\footnotesize \par}

{\footnotesize{}}}{\footnotesize \par}

{\footnotesize{}{}}}{\footnotesize \par}

{\footnotesize{}{}{}} 
\end{table}

{\footnotesize{}{}{{}}} 
\begin{table}[H]
\begin{centering}
{\footnotesize{}{}{{}}}%
\begin{tabular}{|c|c|cc|cc|cc|cc|}
\cline{2-10} 
\multicolumn{1}{c|}{} & {\footnotesize{}{}{{Analytic }}}  & \multicolumn{2}{c|}{{\footnotesize{}{}{{Numerical}}}} & \multicolumn{2}{c|}{{\footnotesize{}{}{{Quantization}}}} & \multicolumn{2}{c|}{{\footnotesize{}{}{{Monte Carlo}}}} & \multicolumn{2}{c|}{{\footnotesize{}{}{{MC-Sobol}}}}\tabularnewline
\hline 
{\footnotesize{}{}{{t }}}  & {\footnotesize{}{}{{EE }}}  & \multicolumn{1}{c}{{\footnotesize{}{}{{EE}}}} & {\footnotesize{}{}{{$\varepsilon$ }}}  & \multicolumn{1}{c}{{\footnotesize{}{}{{EE}}}} & {\footnotesize{}{}{{$\varepsilon$ }}}  & \multicolumn{1}{c}{{\footnotesize{}{}{{EE}}}} & {\footnotesize{}{}{{RSD }}}  & \multicolumn{1}{c}{{\footnotesize{}{}{{EE}}}} & {\footnotesize{}{}{{$\varepsilon$}}}\tabularnewline
\hline 
1w  & 19,884  & 19,883  & -0,007\%  & 19,884  & 0,000\%  & 19,777  & 0,525\%  & 19,882  & -0,012\%\tabularnewline
2w  & 19,896  & 19,894  & -0,008\%  & 19,895  & 0,000\%  & 19,905  & 0,735\%  & 19,892  & -0,016\%\tabularnewline
3w  & 19,907  & 19,905  & -0,009\%  & 19,907  & 0,000\%  & 20,073  & 0,935\%  & 19,904  & -0,016\%\tabularnewline
1m  & 19,918  & 19,917  & -0,010\%  & 19,918  & 0,000\%  & 20,157  & 1,058\%  & 19,918  & -0,004\%\tabularnewline
2m  & 19,972  & 19,969  & -0,014\%  & 19,972  & 0,000\%  & 20,338  & 1,510\%  & 19,971  & -0,008\%\tabularnewline
3m  & 20,022  & 20,019  & -0,017\%  & 20,022  & 0,000\%  & 19,947  & 1,881\%  & 20,013  & -0,045\%\tabularnewline
6m  & 20,173  & 20,168  & -0,026\%  & 20,173  & 0,000\%  & 20,028  & 2,758\%  & 20,177  & 0,018\%\tabularnewline
9m  & 20,325  & 20,318  & -0,035\%  & 20,325  & 0,000\%  & 20,608  & 3,622\%  & 20,278  & -0,230\%\tabularnewline
1y  & 20,478  & 20,468  & -0,044\%  & 20,478  & 0,000\%  & 20,083  & 4,309\%  & 20,509  & 0,156\%\tabularnewline
\hline 
EPE  & 20,236  & 20,229  & -0,030\%  & 20,236  & 0,000\%  & 20,204  & -0,155\%  & 20,232  & -0,019\%\tabularnewline
\hline 
\end{tabular}
\par\end{centering}

{\footnotesize{}{}{{\protect\protect\caption{{\footnotesize{}{}{{EPE: $10\%-$ITM European call. $\sigma=30\%$.\label{tab:ITM30}}}}}
}{\footnotesize \par}

{\footnotesize{}}}{\footnotesize \par}

{\footnotesize{}{}}}{\footnotesize \par}

{\footnotesize{}{}{}} 
\end{table}

{\footnotesize{}{}{{}}} 
\begin{table}[H]
\begin{centering}
{\footnotesize{}{}{{}}}%
\begin{tabular}{|c|c|cc|cc|cc|cc|}
\cline{2-10} 
\multicolumn{1}{c|}{} & {\footnotesize{}{}{{Analytic }}}  & \multicolumn{2}{c|}{{\footnotesize{}{}{{Numerical}}}} & \multicolumn{2}{c|}{{\footnotesize{}{}{{Quantization}}}} & \multicolumn{2}{c|}{{\footnotesize{}{}{{Monte Carlo}}}} & \multicolumn{2}{c|}{{\footnotesize{}{}{{MC-Sobol}}}}\tabularnewline
\hline 
{\footnotesize{}{}{{t }}}  & {\footnotesize{}{}{{EE }}}  & \multicolumn{1}{c}{{\footnotesize{}{}{{EE}}}} & {\footnotesize{}{}{{$\varepsilon$ }}}  & \multicolumn{1}{c}{{\footnotesize{}{}{{EE}}}} & {\footnotesize{}{}{{$\varepsilon$ }}}  & \multicolumn{1}{c}{{\footnotesize{}{}{{EE}}}} & {\footnotesize{}{}{{RSD }}}  & \multicolumn{1}{c}{{\footnotesize{}{}{{EE}}}} & {\footnotesize{}{}{{$\varepsilon$}}}\tabularnewline
\hline 
1w  & 7,48940  & 7,4889  & -0,007\%  & 7,4894  & 0,000\%  & 7,4479  & 0,539\%  & 7,4885  & -0,012\%\tabularnewline
2w  & 7,49372  & 7,4931  & -0,008\%  & 7,4937  & 0,000\%  & 7,4974  & 0,755\%  & 7,4925  & -0,016\%\tabularnewline
3w  & 7,49804  & 7,4974  & -0,009\%  & 7,4981  & 0,000\%  & 7,5623  & 0,960\%  & 7,4968  & -0,016\%\tabularnewline
1m  & 7,50237  & 7,5016  & -0,010\%  & 7,5024  & 0,000\%  & 7,5947  & 1,086\%  & 7,5021  & -0,004\%\tabularnewline
2m  & 7,52260  & 7,5216  & -0,013\%  & 7,5226  & 0,000\%  & 7,6646  & 1,552\%  & 7,5219  & -0,009\%\tabularnewline
3m  & 7,54143  & 7,5402  & -0,017\%  & 7,5414  & 0,000\%  & 7,5128  & 1,935\%  & 7,5379  & -0,046\%\tabularnewline
6m  & 7,59820  & 7,5964  & -0,024\%  & 7,5982  & 0,000\%  & 7,5412  & 2,842\%  & 7,6011  & 0,039\%\tabularnewline
9m  & 7,65540  & 7,6530  & -0,031\%  & 7,6554  & 0,000\%  & 7,7662  & 3,735\%  & 7,6354  & -0,262\%\tabularnewline
1y  & 7,7130  & 7,7099  & -0,037\%  & 7,7130  & 0,000\%  & 7,5699  & 4,473\%  & 7,7345  & 0,282\%\tabularnewline
\hline 
EPE  & 7,6218  & 7,61975  & -0,026\%  & 7,6218  & 0,000\%  & 7,61211  & -0,127\%  & 7,62250  & 0,010\%\tabularnewline
\hline 
\end{tabular}
\par\end{centering}

{\footnotesize{}{}{{\protect\protect\caption{{\footnotesize{}{}{{EPE: ATM European call. $\sigma=15\%$.\label{tab:ATM15}}}}}
}{\footnotesize \par}

{\footnotesize{}}}{\footnotesize \par}

{\footnotesize{}{}}}{\footnotesize \par}

{\footnotesize{}{}{}} 
\end{table}

{\footnotesize{}{}{{}}} 
\begin{table}[H]
\begin{centering}
{\footnotesize{}{}{{}}}%
\begin{tabular}{|c|c|cc|cc|cc|cc|}
\cline{2-10} 
\multicolumn{1}{c|}{} & {\footnotesize{}{}{{Analytic }}}  & \multicolumn{2}{c|}{{\footnotesize{}{}{{Numerical}}}} & \multicolumn{2}{c|}{{\footnotesize{}{}{{Quantization}}}} & \multicolumn{2}{c|}{{\footnotesize{}{}{{Monte Carlo}}}} & \multicolumn{2}{c|}{{\footnotesize{}{}{{MC-Sobol}}}}\tabularnewline
\hline 
{\footnotesize{}{}{{t }}}  & {\footnotesize{}{}{{EE }}}  & \multicolumn{1}{c}{{\footnotesize{}{}{{EE}}}} & {\footnotesize{}{}{{$\varepsilon$ }}}  & \multicolumn{1}{c}{{\footnotesize{}{}{{EE}}}} & {\footnotesize{}{}{{$\varepsilon$ }}}  & \multicolumn{1}{c}{{\footnotesize{}{}{{EE}}}} & {\footnotesize{}{}{{RSD }}}  & \multicolumn{1}{c}{{\footnotesize{}{}{{EE}}}} & {\footnotesize{}{}{{$\varepsilon$}}}\tabularnewline
\hline 
1w  & 11,3550  & 11,3542  & -0,007\%  & 11,3550  & 0,000\%  & 11,2873  & 0,583\%  & 11,3536  & -0,013\%\tabularnewline
2w  & 11,3616  & 11,3606  & -0,009\%  & 11,3616  & 0,000\%  & 11,3670  & 0,815\%  & 11,3597  & -0,016\%\tabularnewline
3w  & 11,3681  & 11,3670  & -0,010\%  & 11,3681  & 0,000\%  & 11,4762  & 1,039\%  & 11,3661  & -0,018\%\tabularnewline
1m  & 11,3747  & 11,3735  & -0,011\%  & 11,3747  & 0,000\%  & 11,5270  & 1,176\%  & 11,3742  & -0,004\%\tabularnewline
2m  & 11,4054  & 11,4036  & -0,016\%  & 11,4054  & 0,000\%  & 11,6289  & 1,684\%  & 11,4042  & -0,010\%\tabularnewline
3m  & 11,4339  & 11,4317  & -0,020\%  & 11,4339  & 0,000\%  & 11,3723  & 2,107\%  & 11,4279  & -0,053\%\tabularnewline
6m  & 11,5200  & 11,5165  & -0,030\%  & 11,5200  & 0,000\%  & 11,3929  & 3,132\%  & 11,5262  & 0,054\%\tabularnewline
9m  & 11,6067  & 11,6020  & -0,040\%  & 11,6067  & 0,000\%  & 11,8133  & 4,142\%  & 11,5793  & -0,236\%\tabularnewline
1y  & 11,6941  & 11,6879  & -0,050\%  & 11,6941  & 0,000\%  & 11,4752  & 4,978\%  & 11,7172  & 0,201\%\tabularnewline
\hline 
EPE  & 11,5558  & 11,5518  & -0,034\%  & 11,5558  & 0,000\%  & 11,53970  & -0,139\%  & 11,55554  & -0,001\%\tabularnewline
\hline 
\end{tabular}
\par\end{centering}

{\footnotesize{}{}{{\protect\protect\caption{{\footnotesize{}{}{{EPE: ATM European call. $\sigma=25\%$.\label{tab:ATM25}}}}}
}{\footnotesize \par}

{\footnotesize{}}}{\footnotesize \par}

{\footnotesize{}{}}}{\footnotesize \par}

{\footnotesize{}{}{}} 
\end{table}

{\footnotesize{}{}{{}}} 
\begin{table}[H]
\begin{centering}
{\footnotesize{}{}{{}}}%
\begin{tabular}{|c|c|cc|cc|cc|cc|}
\cline{2-10} 
\multicolumn{1}{c|}{} & {\footnotesize{}{}{{Analytic }}}  & \multicolumn{2}{c|}{{\footnotesize{}{}{{Numerical}}}} & \multicolumn{2}{c|}{{\footnotesize{}{}{{Quantization}}}} & \multicolumn{2}{c|}{{\footnotesize{}{}{{Monte Carlo}}}} & \multicolumn{2}{c|}{{\footnotesize{}{}{{MC-Sobol}}}}\tabularnewline
\hline 
{\footnotesize{}{}{{t }}}  & {\footnotesize{}{}{{EE }}}  & \multicolumn{1}{c}{{\footnotesize{}{}{{EE}}}} & {\footnotesize{}{}{{$\varepsilon$ }}}  & \multicolumn{1}{c}{{\footnotesize{}{}{{EE}}}} & {\footnotesize{}{}{{$\varepsilon$ }}}  & \multicolumn{1}{c}{{\footnotesize{}{}{{EE}}}} & {\footnotesize{}{}{{RSD }}}  & \multicolumn{1}{c}{{\footnotesize{}{}{{EE}}}} & {\footnotesize{}{}{{$\varepsilon$}}}\tabularnewline
\hline 
1w  & 13,291  & 13,290  & -0,008\%  & 13,291  & 0,000\%  & 13,209  & 0,600\%  & 13,289  & -0,013\%\tabularnewline
2w  & 13,298  & 13,297  & -0,009\%  & 13,298  & 0,000\%  & 13,305  & 0,839\%  & 13,296  & -0,016\%\tabularnewline
3w  & 13,306  & 13,305  & -0,010\%  & 13,306  & 0,000\%  & 13,438  & 1,070\%  & 13,303  & -0,019\%\tabularnewline
1m  & 13,314  & 13,312  & -0,011\%  & 13,314  & 0,000\%  & 13,498  & 1,211\%  & 13,313  & -0,005\%\tabularnewline
2m  & 13,349  & 13,347  & -0,016\%  & 13,349  & 0,000\%  & 13,614  & 1,736\%  & 13,348  & -0,010\%\tabularnewline
3m  & 13,383  & 13,380  & -0,021\%  & 13,383  & 0,000\%  & 13,301  & 2,176\%  & 13,375  & -0,056\%\tabularnewline
6m  & 13,484  & 13,479  & -0,034\%  & 13,484  & 0,000\%  & 13,314  & 3,251\%  & 13,491  & 0,057\%\tabularnewline
9m  & 13,585  & 13,579  & -0,045\%  & 13,585  & 0,000\%  & 13,846  & 4,312\%  & 13,555  & -0,220\%\tabularnewline
1y  & 13,687  & 13,679  & -0,057\%  & 13,687  & 0,000\%  & 13,422  & 5,193\%  & 13,711  & 0,172\%\tabularnewline
\hline 
EPE  & 13,526  & 13,520  & -0,038\%  & 13,526  & 0,000\%  & 13,504  & -0,161\%  & 13,525  & -0,004\%\tabularnewline
\hline 
\end{tabular}
\par\end{centering}

{\footnotesize{}{}{{\protect\protect\caption{{\footnotesize{}{}{{EPE: ATM European call. $\sigma=30\%$.\label{tab:ATM30}}}}}
}{\footnotesize \par}

{\footnotesize{}}}{\footnotesize \par}

{\footnotesize{}{}}}{\footnotesize \par}

{\footnotesize{}{}{}} 
\end{table}

{\footnotesize{}{}{{}}} 
\begin{table}[H]
\begin{centering}
{\footnotesize{}{}{{}}}%
\begin{tabular}{|c|c|cc|cc|cc|cc|}
\cline{2-10} 
\multicolumn{1}{c|}{} & {\footnotesize{}{}{{Analytic }}}  & \multicolumn{2}{c|}{{\footnotesize{}{}{{Numerical}}}} & \multicolumn{2}{c|}{{\footnotesize{}{}{{Quantization}}}} & \multicolumn{2}{c|}{{\footnotesize{}{}{{Monte Carlo}}}} & \multicolumn{2}{c|}{{\footnotesize{}{}{{MC-Sobol}}}}\tabularnewline
\hline 
{\footnotesize{}{}{{t }}}  & {\footnotesize{}{}{{EE }}}  & \multicolumn{1}{c}{{\footnotesize{}{}{{EE}}}} & {\footnotesize{}{}{{$\varepsilon$ }}}  & \multicolumn{1}{c}{{\footnotesize{}{}{{EE}}}} & {\footnotesize{}{}{{$\varepsilon$ }}}  & \multicolumn{1}{c}{{\footnotesize{}{}{{EE}}}} & {\footnotesize{}{}{{RSD }}}  & \multicolumn{1}{c}{{\footnotesize{}{}{{EE}}}} & {\footnotesize{}{}{{$\varepsilon$}}}\tabularnewline
\hline 
1w  & 2,7600  & 2,7598  & -0,008\%  & 2,7600  & 0,000\%  & 2,7396  & 0,731\%  & 2,7597  & -0,012\%\tabularnewline
2w  & 2,7616  & 2,76135  & -0,010\%  & 2,7616  & 0,000\%  & 2,7628  & 1,023\%  & 2,7612  & -0,016\%\tabularnewline
3w  & 2,7632  & 2,7629  & -0,012\%  & 2,7632  & 0,000\%  & 2,7987  & 1,310\%  & 2,7626  & -0,016\%\tabularnewline
1m  & 2,7649  & 2,7644  & -0,014\%  & 2,7649  & 0,000\%  & 2,8121  & 1,483\%  & 2,7647  & -0,004\%\tabularnewline
2m  & 2,7723  & 2,7716  & -0,023\%  & 2,7723  & 0,000\%  & 2,8325  & 2,145\%  & 2,7720  & -0,009\%\tabularnewline
3m  & 2,7792  & 2,7784  & -0,030\%  & 2,7792  & 0,000\%  & 2,7476  & 2,725\%  & 2,7772  & -0,046\%\tabularnewline
6m  & 2,8001  & 2,7988  & -0,049\%  & 2,8001  & 0,000\%  & 2,7319  & 4,239\%  & 2,8055  & 0,039\%\tabularnewline
9m  & 2,8212  & 2,8194  & -0,065\%  & 2,8212  & 0,000\%  & 2,9162  & 5,719\%  & 2,8119  & -0,262\%\tabularnewline
1y  & 2,8424  & 2,8401  & -0,080\%  & 2,8424  & 0,000\%  & 2,8243  & 7,036\%  & 2,8258  & 0,282\%\tabularnewline
\hline 
EPE  & 2,8088  & 2,80730  & -0,054\%  & 2,8088  & 0,000\%  & 2,81496  & 0,218\%  & 2,80347  & -0,191\%\tabularnewline
\hline 
\end{tabular}
\par\end{centering}

{\footnotesize{}{}{{\protect\protect\caption{{\footnotesize{}{}{{EPE: $10\%-$OTM European call. $\sigma=15\%$.\label{tab:OTM15}}}}}
}{\footnotesize \par}

{\footnotesize{}}}{\footnotesize \par}

{\footnotesize{}{}}}{\footnotesize \par}

{\footnotesize{}{}{}} 
\end{table}

{\footnotesize{}{}{{}}} 
\begin{table}[H]
\begin{centering}
{\footnotesize{}{}{{}}}%
\begin{tabular}{|c|c|cc|cc|cc|cc|}
\cline{2-10} 
\multicolumn{1}{c|}{} & {\footnotesize{}{}{{Analytic }}}  & \multicolumn{2}{c|}{{\footnotesize{}{}{{Numerical}}}} & \multicolumn{2}{c|}{{\footnotesize{}{}{{Quantization}}}} & \multicolumn{2}{c|}{{\footnotesize{}{}{{Monte Carlo}}}} & \multicolumn{2}{c|}{{\footnotesize{}{}{{MC-Sobol}}}}\tabularnewline
\hline 
{\footnotesize{}{}{{t }}}  & {\footnotesize{}{}{{EE }}}  & \multicolumn{1}{c}{{\footnotesize{}{}{{EE}}}} & {\footnotesize{}{}{{$\varepsilon$ }}}  & \multicolumn{1}{c}{{\footnotesize{}{}{{EE}}}} & {\footnotesize{}{}{{$\varepsilon$ }}}  & \multicolumn{1}{c}{{\footnotesize{}{}{{EE}}}} & {\footnotesize{}{}{{RSD }}}  & \multicolumn{1}{c}{{\footnotesize{}{}{{EE}}}} & {\footnotesize{}{}{{$\varepsilon$}}}\tabularnewline
\hline 
1w  & 6,2016  & 6,2011  & -0,008\%  & 6,2016  & 0,000\%  & 6,1579  & 0,695\%  & 6,2008  & -0,013\%\tabularnewline
2w  & 6,2052  & 6,2046  & -0,010\%  & 6,2052  & 0,000\%  & 6,2080  & 0,973\%  & 6,2042  & -0,016\%\tabularnewline
3w  & 6,2088  & 6,2081  & -0,012\%  & 6,2088  & 0,000\%  & 6,2835  & 1,245\%  & 6,2074  & -0,018\%\tabularnewline
1m  & 6,2124  & 6,2116  & -0,013\%  & 6,2124  & 0,000\%  & 6,3131  & 1,409\%  & 6,2121  & -0,004\%\tabularnewline
2m  & 6,2291  & 6,2278  & -0,021\%  & 6,2291  & 0,000\%  & 6,3616  & 2,033\%  & 6,2285  & -0,010\%\tabularnewline
3m  & 6,2447  & 6,2430  & -0,027\%  & 6,2447  & 0,000\%  & 6,1831  & 2,573\%  & 6,2405  & -0,053\%\tabularnewline
6m  & 6,2917  & 6,2889  & -0,045\%  & 6,2917  & 0,000\%  & 6,1594  & 3,956\%  & 6,3005  & 0,054\%\tabularnewline
9m  & 6,3391  & 6,3352  & -0,062\%  & 6,3391  & 0,000\%  & 6,5221  & 5,314\%  & 6,3217  & -0,236\%\tabularnewline
1y  & 6,3868  & 6,3817  & -0,077\%  & 6,3868  & 0,000\%  & 6,3051  & 6,501\%  & 6,3418  & 0,201\%\tabularnewline
\hline 
EPE  & 6,3113  & 6,3080  & -0,051\%  & 6,3113  & 0,000\%  & 6,31287  & 0,026\%  & 6,29739  & -0,220\%\tabularnewline
\hline 
\end{tabular}
\par\end{centering}

{\footnotesize{}{}{{\protect\protect\caption{{\footnotesize{}{}{{EPE: $10\%-$OTM European call. $\sigma=25\%$.\label{tab:OTM25}}}}}
}{\footnotesize \par}

{\footnotesize{}}}{\footnotesize \par}

{\footnotesize{}{}}}{\footnotesize \par}

{\footnotesize{}{}{}} 
\end{table}

{\footnotesize{}{}{{}}} 
\begin{table}[H]
\begin{centering}
{\footnotesize{}{}{{}}}%
\begin{tabular}{|c|c|cc|cc|cc|cc|}
\cline{2-10} 
\multicolumn{1}{c|}{} & {\footnotesize{}{}{{Analytic }}}  & \multicolumn{2}{c|}{{\footnotesize{}{}{{Numerical}}}} & \multicolumn{2}{c|}{{\footnotesize{}{}{{Quantization}}}} & \multicolumn{2}{c|}{{\footnotesize{}{}{{Monte Carlo}}}} & \multicolumn{2}{c|}{{\footnotesize{}{}{{MC-Sobol}}}}\tabularnewline
\hline 
{\footnotesize{}{}{{t }}}  & {\footnotesize{}{}{{EE }}}  & \multicolumn{1}{c}{{\footnotesize{}{}{{EE}}}} & {\footnotesize{}{}{{$\varepsilon$ }}}  & \multicolumn{1}{c}{{\footnotesize{}{}{{EE}}}} & {\footnotesize{}{}{{$\varepsilon$ }}}  & \multicolumn{1}{c}{{\footnotesize{}{}{{EE}}}} & {\footnotesize{}{}{{RSD }}}  & \multicolumn{1}{c}{{\footnotesize{}{}{{EE}}}} & {\footnotesize{}{}{{$\varepsilon$}}}\tabularnewline
\hline 
1w  & 7,9807  & 7,9800  & -0,008\%  & 7,9807  & 0,000\%  & 7,9245  & 0,693\%  & 7,9795  & -0,013\%\tabularnewline
2w  & 7,9853  & 7,9845  & -0,010\%  & 7,9853  & 0,000\%  & 7,9889  & 0,970\%  & 7,9839  & -0,016\%\tabularnewline
3w  & 7,9899  & 7,9889  & -0,011\%  & 7,9899  & 0,000\%  & 8,0857  & 1,241\%  & 7,9881  & -0,019\%\tabularnewline
1m  & 7,9945  & 7,9934  & -0,013\%  & 7,9945  & 0,000\%  & 8,1237  & 1,405\%  & 7,9941  & -0,005\%\tabularnewline
2m  & 8,0160  & 8,0144  & -0,021\%  & 8,0160  & 0,000\%  & 8,1858  & 2,027\%  & 8,0153  & -0,010\%\tabularnewline
3m  & 8,0361  & 8,0339  & -0,028\%  & 8,0361  & 0,000\%  & 7,9568  & 2,564\%  & 8,0306  & -0,056\%\tabularnewline
6m  & 8,0966  & 8,0928  & -0,046\%  & 8,0966  & 0,000\%  & 7,9268  & 3,942\%  & 8,1067  & 0,057\%\tabularnewline
9m  & 8,1575  & 8,1523  & -0,064\%  & 8,1575  & 0,000\%  & 8,3893  & 5,295\%  & 8,1376  & -0,220\%\tabularnewline
1y  & 8,2189  & 8,2120  & -0,082\%  & 8,2189  & 0,000\%  & 8,0964  & 6,475\%  & 8,1760  & 0,172\%\tabularnewline
\hline 
EPE  & 8,1218  & 8,11738  & -0,053\%  & 8,1218  & 0,000\%  & 8,11858  & -0,038\%  & 8,10792  & -0,170\%\tabularnewline
\hline 
\end{tabular}
\par\end{centering}

{\footnotesize{}{}{{\protect\protect\caption{{\footnotesize{}{}{{EPE: $10\%-$OTM European call. $\sigma=30\%$.\label{tab:OTM30}}}}}
}{\footnotesize \par}

{\footnotesize{}}}{\footnotesize \par}

{\footnotesize{}{}}}{\footnotesize \par}

{\footnotesize{}{}{}} 
\end{table}

Comparing EPE values reported in tables, we deduce that the quantization
approach provides highly satisfactory results for ATM, ITM, OTM European call options.

We also note that the Monte Carlo relative errors
increase for the out of the money cases. This is due to the fact
that in the considered framework, the true value of EE becomes very small.
It is worth to note that the VQ also works better than the numerical
integration.

For the sake of completeness and in order to stress how the quantization
technique perform better than Monte Carlo method, we report two figures showing the error $\varepsilon$ for Monte Carlo and quantization
performances. The plots were obtained for a more granular combination of the couple
(Spot,Volatility) values. 

\begin{figure}[H]
\begin{centering}
\includegraphics[scale=0.5]{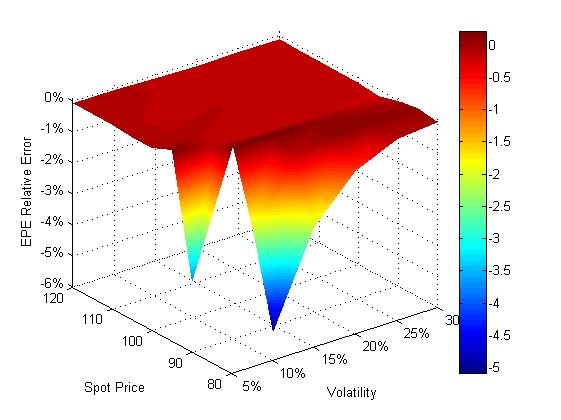} 
\par\end{centering}

{\footnotesize{{\caption{{\footnotesize{{MC: EPE error $\varepsilon$ for European Call.\label{fig:OTM_Call_eu_MC}}}}}
}}{\footnotesize \par}

{\footnotesize{}}} 
\end{figure}

\begin{figure}[H]
\begin{centering}
\includegraphics[scale=0.5]{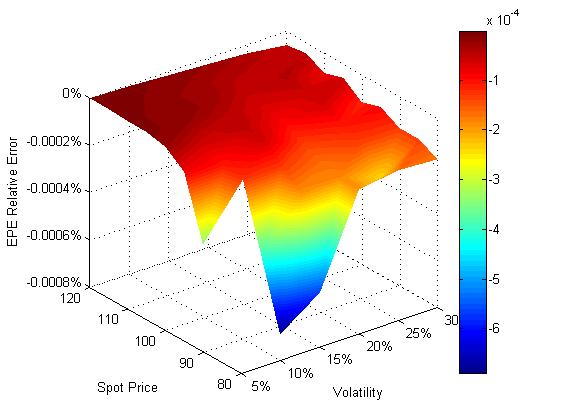} 
\par\end{centering}

{\footnotesize{{\caption{{\footnotesize{{VQ: error $\varepsilon$ for European Call.\label{fig:OTM_Call_eu_VQ}}}}}
}}{\footnotesize \par}

{\footnotesize{}}} 
\end{figure}

\subsection{A portfolio of European options}

\label{num3}

In order to generalize techniques and results shown in the previous
subsection, we are going to consider now a set of $n$ European options,
i.e. a derivative portfolio, for which we will evaluate the Expected
Exposure (EE) and the Expected Positive Exposure (EPE).

The portfolio may consist of call options and put options, related
to a group of trans\-actions with a single counterparty, which are
subject to a legally enforceable bilateral netting arrangement. Such
a set of derivatives is called \emph{netting set.} From a mathematical
point of view, such a choice means that the expected exposure is given
by 
\[
EE=\left(\sum_{j=1}^{n}MtM(t)-V_{t}\right)^{+}\;,
\]
unlike the case of no-netting portfolio setting, where the sum of
the fair prices of European options represents the Mark-to-Market
of the portfolio. The term $V_{t}$ represent the \emph{collateral}
value posted by the debtor, i.e. the counterparty with the negative
Mark-to-Market portfolio at time \emph{t}. In what follows we
set $V_{t}=0$, therefore we deal with the netting agreement situation,
supposing  no collateral. Even if the set up can appear simple, this is not true, and the
problem turns out to be rather challenging.
That is because, generally speaking, one can
not perform the analytical calculation of $EE_{k}$. In fact, also
if the martingale property holds for each derivative in the portfolio,
in the present case the positive part operator is effective, hence the future
value is not simply the compounded current MtM.

For computational purposes, we consider $n=10$ European options,
that is $5$ call options, the first, the third and the last one are
of \emph{buy} type, while the second and the fourth one are of \emph{sell}
type, and $5$ put options, namely, the first, the second and the
last one of \emph{sell} type and the remaining of \emph{buy} type.
To summarize, we have constructed a table with the features of the
different options, see Table \ref{table_ptf}.

The software code is quite general to deal with any change in quantities,
buy/sell, market and instrument parameters.

For the application we consider the following values: 
\begin{itemize}
\item spot price $(S_{0}):$ $90,100,110$ 
\item interest rate $(r):$ $3\%$ 
\item volatility $(\sigma):$ $15\%,25\%,30\%$ 
\item maturity $(T):$ one year 
\item time buckets: $\left\{ 0,\frac{1}{52},\frac{2}{52},\frac{3}{52},\frac{4}{52},\frac{2}{12},\frac{3}{12},\frac{6}{12},\frac{9}{12},1\right\} .$ 
\end{itemize}
\begin{table}[H]
\begin{centering}
{\footnotesize{}{}{}}%
\begin{tabular}{|c|c|c|c|c|}
\hline 
 & type  & position  & strike  & maturity \tabularnewline
\hline 
Option 1  & call  & buy  & $125$  & 1 year \tabularnewline
Option 2  & call  & sell  & $100$  & 1 year \tabularnewline
Option 3  & call  & buy  & $80$  & 1 year \tabularnewline
Option 4  & call  & sell  & $95$  & 1 year \tabularnewline
Option 5  & call  & buy  & $105$  & 1 year \tabularnewline
Option 6  & put  & sell  & $80$  & 1 year \tabularnewline
Option 7  & put  & sell  & $100$  & 1 year \tabularnewline
Option 8  & put  & buy  & $110$  & 1 year \tabularnewline
Option 9  & put  & buy  & $90$  & 1 year \tabularnewline
Option 10  & put  & sell  & $120$  & 1 year \tabularnewline
\hline 
\end{tabular}
\par\end{centering}

{\footnotesize{}{}{\protect\protect\caption{{\footnotesize{}{}{Description of the portfolio. The table summarizes
the characteristics of the netting set}}}
}{\footnotesize \par}

{\footnotesize{}}}{\footnotesize \par}

{\footnotesize{}{}} \label{table_ptf} 
\end{table}

To better understand the characteristics of the portfolio, let us
consider the following two figures, which show the portfolio profile
for the different volatilities. It is a \emph{long} style portfolio,
with different levels of \emph{delta} sensitivities.

\begin{figure}[H]
\begin{centering}
\includegraphics[scale=0.5]{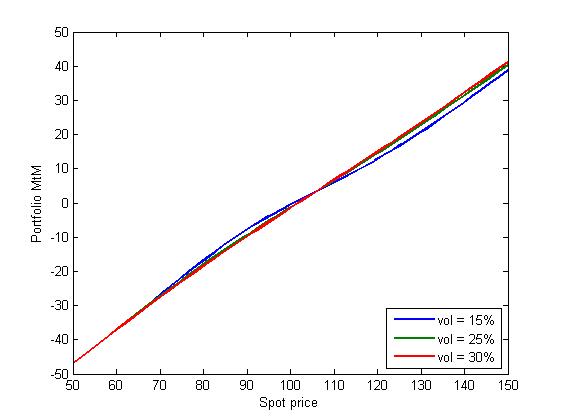} 
\par\end{centering}

{\footnotesize{}{}{{\protect\protect\caption{{\footnotesize{}{}{{MtM of the portfolio depending on volatility.
\label{fig:MtM}}}}}
}{\footnotesize \par}

{\footnotesize{}}}{\footnotesize \par}

{\footnotesize{}{}}}{\footnotesize \par}

{\footnotesize{}{}{}} 
\end{figure}

\begin{figure}[H]
\begin{centering}
\includegraphics[scale=0.5]{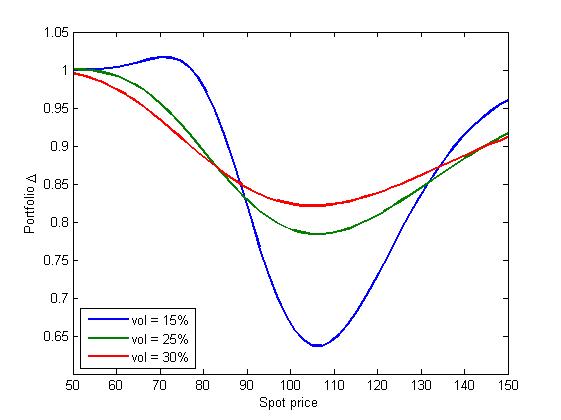} 
\par\end{centering}

{\footnotesize{}{}{{\protect\protect\caption{{\footnotesize{}{}{{Delta of the portfolio depending on volatility.
\label{fig:delta}}}}}
}{\footnotesize \par}

{\footnotesize{}}}{\footnotesize \par}

{\footnotesize{}{}}}{\footnotesize \par}

{\footnotesize{}{}{}} 
\end{figure}

The benchmark value for this application can not be a closed Black-Scholes
approach. Hence we use a Monte Carlo-Sobol sequence with a
large enough number of points as an acceptable pivot value. We use
$10^{6}$ points. The vector quantization and the Monte Carlo techniques
are tested with $10^{3}$ points.

As it was done in Subsection \ref{num2}, in relation to the case
of only one option, the comparison among the different procedures
consists in analyzing the percent relative standard error (RSD) for
the Monte Carlo approach and the percent relative error ($\varepsilon$)
for the quantization technique.

\begin{table}[H]
\begin{centering}
{\footnotesize{}{}{}}%
\begin{tabular}{|c|c|cc|cc|cc|}
\cline{2-8} 
\multicolumn{1}{c|}{} & {\footnotesize{}{}{MC-Sobol $10^{6}$ }}  & \multicolumn{2}{c|}{{\footnotesize{}{}{Quantization}}} & \multicolumn{2}{c|}{{\footnotesize{}{}{Monte Carlo}}} & \multicolumn{2}{c|}{{\footnotesize{}{}{MC-Sobol $10^{3}$}}}\tabularnewline
\hline 
{\footnotesize{}{}{t }}  & {\footnotesize{}{}{EE }}  & \multicolumn{1}{c}{{\footnotesize{}{}{EE}}} & {\footnotesize{}{}{$\varepsilon$ }}  & \multicolumn{1}{c}{{\footnotesize{}{}{EE}}} & {\footnotesize{}{}{RSD }}  & \multicolumn{1}{c}{{\footnotesize{}{}{EE}}} & {\footnotesize{}{}{RSD }} \tabularnewline
\hline 
1w  & 0,0000  & 0,0000  & NaN  & 0,0000  & NaN  & 0,0000  & NaN \tabularnewline
2w  & 0,0000  & 0,0000  & 4,184\%  & 0,0000  & NaN  & 0,0002  & 99,950\% \tabularnewline
3w  & 0,0006  & 0,0006  & -0,585\%  & 0,0007  & 99,950\%  & 0,0004  & 99,950\% \tabularnewline
1m  & 0,0030  & 0,0030  & -0,021\%  & 0,0022  & 70,933\%  & 0,0060  & 58,952\% \tabularnewline
2m  & 0,0537  & 0,0537  & 0,033\%  & 0,0425  & 22,046\%  & 0,0586  & 20,217\% \tabularnewline
3m  & 0,1462  & 0,1463  & 0,048\%  & 0,1254  & 15,505\%  & 0,1552  & 15,306\% \tabularnewline
6m  & 0,5045  & 0,5045  & 0,010\%  & 0,4539  & 10,512\%  & 0,5054  & 10,581\% \tabularnewline
9m  & 0,8529  & 0,8529  & -0,005\%  & 0,8965  & 8,428\%  & 0,8926  & 8,581\% \tabularnewline
1y  & 1,3863  & 1,3874  & 0,078\%  & 1,2717  & 7,466\%  & 1,4630  & 7,294\% \tabularnewline
\hline 
EPE  & 0,7030  & 0,7033  & 0,040\%  & 0,6698  & -4,719\%  & 0,7336  & 4,348\% \tabularnewline
\hline 
\end{tabular}
\par\end{centering}

{\footnotesize{}{}{\protect\protect\caption{{\footnotesize{}{}{EPE: $S_{0}=90;\sigma=15\%.$ \label{tab:90.15}}}}
}{\footnotesize \par}

{\footnotesize{}}}{\footnotesize \par}

{\footnotesize{}{}} 
\end{table}

\begin{table}[H]
\begin{centering}
{\footnotesize{}{}{}}%
\begin{tabular}{|c|c|cc|cc|cc|}
\cline{2-8} 
\multicolumn{1}{c|}{} & {\footnotesize{}{}{MC-Sobol $10^{6}$ }}  & \multicolumn{2}{c|}{{\footnotesize{}{}{Quantization}}} & \multicolumn{2}{c|}{{\footnotesize{}{}Monte Carlo}} & \multicolumn{2}{c|}{{\footnotesize{}{}{MC-Sobol $10^{3}$}}}\tabularnewline
\hline 
{\footnotesize{}{}{t }}  & {\footnotesize{}{}{EE }}  & \multicolumn{1}{c}{{\footnotesize{}{}{EE}}} & {\footnotesize{}{}{$\varepsilon$ }}  & \multicolumn{1}{c}{{\footnotesize{}{}{EE}}} & {\footnotesize{}{}{RSD }}  & \multicolumn{1}{c}{{\footnotesize{}{}{EE}}} & {\footnotesize{}{}{RSD }} \tabularnewline
\hline 
1w  & 0,0001  & 0,0001  & 0,701\%  & 0,0001  & 99,950\%  & 0,0000  & NaN \tabularnewline
2w  & 0,0075  & 0,0075  & 0,032\%  & 0,0074  & 54,701\%  & 0,0093  & 64,637\% \tabularnewline
3w  & 0,0348  & 0,0348  & -0,046\%  & 0,0523  & 24,373\%  & 0,0427  & 27,104\% \tabularnewline
1m  & 0,0817  & 0,0817  & 0,014\%  & 0,0939  & 18,901\%  & 0,0899  & 23,586\% \tabularnewline
2m  & 0,4319  & 0,4319  & 0,001\%  & 0,4329  & 11,475\%  & 0,4489  & 12,248\% \tabularnewline
3m  & 0,8119  & 0,8121  & 0,028\%  & 0,7440  & 10,323\%  & 0,8226  & 10,542\% \tabularnewline
6m  & 1,8836  & 1,8837  & 0,005\%  & 1,7520  & 8,568\%  & 1,9046  & 8,687\% \tabularnewline
9m  & 2,7611  & 2,7612  & 0,005\%  & 2,9002  & 7,896\%  & 2,8727  & 8,103\% \tabularnewline
1y  & 3,5191  & 3,5217  & 0,075\%  & 3,4064  & 7,868\%  & 3,5674  & 8,452\% \tabularnewline
\hline 
EPE  & 2,1497  & 2,1505  & 0,034\%  & 2,1185  & -1,454\%  & 2,1977  & 2,232\% \tabularnewline
\hline 
\end{tabular}
\par\end{centering}

{\footnotesize{}{}{\protect\protect\caption{{\footnotesize{}{}{EPE: $S_{0}=90,\sigma=25\%.$ \label{tab:90.25}}}}
}{\footnotesize \par}

{\footnotesize{}}}{\footnotesize \par}

{\footnotesize{}{}} 
\end{table}

\begin{table}[H]
\begin{centering}
{\footnotesize{}{}{}}%
\begin{tabular}{|c|c|cc|cc|cc|}
\cline{2-8} 
\multicolumn{1}{c|}{} & {\footnotesize{}{}{MC-Sobol $10^{6}$ }}  & \multicolumn{2}{c|}{{\footnotesize{}{}{Quantization}}} & \multicolumn{2}{c|}{{\footnotesize{}{}Monte Carlo}} & \multicolumn{2}{c|}{{\footnotesize{}{}{MC-Sobol $10^{3}$}}}\tabularnewline
\hline 
{\footnotesize{}{}{t }}  & {\footnotesize{}{}{EE }}  & \multicolumn{1}{c}{{\footnotesize{}{}{EE}}} & {\footnotesize{}{}{$\varepsilon$ }}  & \multicolumn{1}{c}{{\footnotesize{}{}{EE}}} & {\footnotesize{}{}{RSD }}  & \multicolumn{1}{c}{{\footnotesize{}{}{EE}}} & {\footnotesize{}{}{RSD }} \tabularnewline
\hline 
1w  & 0,0013  & 0,0013  & 0,145\%  & 0,0040  & 68,879\%  & 0,0005  & 99,950\% \tabularnewline
2w  & 0,0291  & 0,0291  & 0,004\%  & 0,0325  & 30,112\%  & 0,0312  & 37,205\% \tabularnewline
3w  & 0,0981  & 0,0981  & -0,013\%  & 0,1321  & 18,842\%  & 0,1157  & 19,728\% \tabularnewline
1m  & 0,1954  & 0,1954  & -0,012\%  & 0,2277  & 14,700\%  & 0,2049  & 17,692\% \tabularnewline
2m  & 0,7796  & 0,7796  & 0,004\%  & 0,7821  & 10,110\%  & 0,7997  & 10,680\% \tabularnewline
3m  & 1,3415  & 1,3418  & 0,021\%  & 1,2408  & 9,292\%  & 1,3594  & 9,441\% \tabularnewline
6m  & 2,8256  & 2,8257  & 0,005\%  & 2,6487  & 8,089\%  & 2,8590  & 8,221\% \tabularnewline
9m  & 4,0074  & 4,0077  & 0,008\%  & 4,2019  & 7,686\%  & 4,1538  & 7,896\% \tabularnewline
1y  & 4,9419  & 4,9447  & 0,058\%  & 4,8168  & 7,829\%  & 5,0104  & 8,410\% \tabularnewline
\hline 
EPE  & 3,1317  & 3,1325  & 0,027\%  & 3,0981  & -1,074\%  & 3,1976  & 2,106\% \tabularnewline
\hline 
\end{tabular}
\par\end{centering}

{\footnotesize{}{}{\protect\protect\caption{{\footnotesize{}{}{EPE: $S_{0}=90,\sigma=30\%.$ \label{tab:90.30}}}}
}{\footnotesize \par}

{\footnotesize{}}}{\footnotesize \par}

{\footnotesize{}{}} 
\end{table}

\begin{table}[H]
\begin{centering}
{\footnotesize{}{}{}}%
\begin{tabular}{|c|c|cc|cc|cc|}
\cline{2-8} 
\multicolumn{1}{c|}{} & {\footnotesize{}{}{MC-Sobol $10^{6}$ }}  & \multicolumn{2}{c|}{{\footnotesize{}{}{Quantization}}} & \multicolumn{2}{c|}{{\footnotesize{}{}Monte Carlo}} & \multicolumn{2}{c|}{{\footnotesize{}{}{MC sobol $10^{3}$}}}\tabularnewline
\hline 
{\footnotesize{}{}{t }}  & {\footnotesize{}{}{EE }}  & \multicolumn{1}{c}{{\footnotesize{}{}{EE}}} & {\footnotesize{}{}{$\varepsilon$ }}  & \multicolumn{1}{c}{{\footnotesize{}{}{EE}}} & {\footnotesize{}{}{RSD }}  & \multicolumn{1}{c}{{\footnotesize{}{}{EE}}} & {\footnotesize{}{}{RSD }} \tabularnewline
\hline 
1w  & 0,3565  & 0,3565  & 0,000\%  & 0,3325  & 6,223\%  & 0,3562  & 5,746\% \tabularnewline
2w  & 0,5758  & 0,5758  & 0,000\%  & 0,5681  & 5,447\%  & 0,5757  & 5,433\% \tabularnewline
3w  & 0,7463  & 0,7463  & 0,000\%  & 0,7992  & 5,182\%  & 0,7482  & 5,275\% \tabularnewline
1m  & 0,8904  & 0,8904  & -0,003\%  & 0,9439  & 5,022\%  & 0,8889  & 5,266\% \tabularnewline
2m  & 1,3966  & 1,3966  & 0,000\%  & 1,4407  & 4,793\%  & 1,4016  & 5,007\% \tabularnewline
3m  & 1,7463  & 1,7463  & 0,004\%  & 1,6845  & 4,950\%  & 1,7351  & 5,051\% \tabularnewline
6m  & 2,5014  & 2,5014  & 0,003\%  & 2,4279  & 4,960\%  & 2,5149  & 5,097\% \tabularnewline
9m  & 3,0139  & 3,0138  & -0,002\%  & 3,0990  & 5,225\%  & 3,0437  & 5,377\% \tabularnewline
1y  & 3,5845  & 3,5852  & 0,018\%  & 3,5369  & 5,253\%  & 3,6740  & 5,459\% \tabularnewline
\hline 
EPE  & 2,5952  & 2,5954  & 0,007\%  & 2,5865  & -0,337\%  & 2,6279  & 1,261\% \tabularnewline
\hline 
\end{tabular}
\par\end{centering}

{\footnotesize{}{}{\protect\protect\caption{{\footnotesize{}{}{EPE: $S_{0}=100,\sigma=15\%.$ \label{tab:100.15}}}}
}{\footnotesize \par}

{\footnotesize{}}}{\footnotesize \par}

{\footnotesize{}{}} 
\end{table}

\begin{table}[H]
\begin{centering}
{\footnotesize{}{}{}}%
\begin{tabular}{|c|c|cc|cc|cc|}
\cline{2-8} 
\multicolumn{1}{c|}{} & {\footnotesize{}{}{MC-Sobol $10^{6}$ }}  & \multicolumn{2}{c|}{{\footnotesize{}{}{Quantization}}} & \multicolumn{2}{c|}{{\footnotesize{}{}Monte Carlo}} & \multicolumn{2}{c|}{{\footnotesize{}{}{MC-Sobol $10^{3}$}}}\tabularnewline
\hline 
{\footnotesize{}{}{t }}  & {\footnotesize{}{}{EE }}  & \multicolumn{1}{c}{{\footnotesize{}{}{EE}}} & {\footnotesize{}{}{$\varepsilon$ }}  & \multicolumn{1}{c}{{\footnotesize{}{}{EE}}} & {\footnotesize{}{}{RSD }}  & \multicolumn{1}{c}{{\footnotesize{}{}{EE}}} & {\footnotesize{}{}{RSD }} \tabularnewline
\hline 
1w  & 0,5510  & 0,5510  & 0,000\%  & 0,5113  & 7,316\%  & 0,5500  & 6,632\% \tabularnewline
2w  & 0,9683  & 0,9683  & 0,000\%  & 0,9532  & 6,113\%  & 0,9677  & 6,094\% \tabularnewline
3w  & 1,2999  & 1,2999  & 0,000\%  & 1,4062  & 5,717\%  & 1,2986  & 5,884\% \tabularnewline
1m  & 1,5831  & 1,5831  & 0,001\%  & 1,6824  & 5,537\%  & 1,5901  & 5,777\% \tabularnewline
2m  & 2,5909  & 2,5909  & 0,001\%  & 2,6600  & 5,273\%  & 2,5990  & 5,501\% \tabularnewline
3m  & 3,2975  & 3,2977  & 0,004\%  & 3,1569  & 5,466\%  & 3,2792  & 5,561\% \tabularnewline
6m  & 4,8611  & 4,8614  & 0,006\%  & 4,6669  & 5,563\%  & 4,8787  & 5,676\% \tabularnewline
9m  & 5,9723  & 5,9725  & 0,002\%  & 6,1526  & 5,816\%  & 6,0292  & 6,002\% \tabularnewline
1y  & 6,8363  & 6,8377  & 0,020\%  & 6,6731  & 6,139\%  & 6,9781  & 6,320\% \tabularnewline
\hline 
EPE  & 5,0094  & 5,0099  & 0,009\%  & 4,9625  & -0,936\%  & 5,0627  & 1,065\% \tabularnewline
\hline 
\end{tabular}
\par\end{centering}

{\footnotesize{}{}{\protect\protect\caption{{\footnotesize{}{}{EPE: $S_{0}=100,\sigma=25\%.$ \label{tab:100.25}}}}
}{\footnotesize \par}

{\footnotesize{}}}{\footnotesize \par}

{\footnotesize{}{}} 
\end{table}

\begin{table}[H]
\begin{centering}
{\footnotesize{}{}{}}%
\begin{tabular}{|c|c|cc|cc|cc|}
\cline{2-8} 
\multicolumn{1}{c|}{} & {\footnotesize{}{}{MC-Sobol $10^{6}$ }}  & \multicolumn{2}{c|}{{\footnotesize{}{}{Quantization}}} & \multicolumn{2}{c|}{{\footnotesize{}{}Monte Carlo}} & \multicolumn{2}{c|}{{\footnotesize{}{}{MC-Sobol $10^{3}$}}}\tabularnewline
\hline 
{\footnotesize{}{}{t }}  & {\footnotesize{}{}{EE }}  & \multicolumn{1}{c}{{\footnotesize{}{}{EE}}} & {\footnotesize{}{}{$\varepsilon$ }}  & \multicolumn{1}{c}{{\footnotesize{}{}{EE}}} & {\footnotesize{}{}{RSD }}  & \multicolumn{1}{c}{{\footnotesize{}{}{EE}}} & {\footnotesize{}{}{RSD }} \tabularnewline
\hline 
1w  & 0,7348  & 0,7348  & 0,000\%  & 0,6826  & 7,124\%  & 0,7335  & 6,474\% \tabularnewline
2w  & 1,2651  & 1,2651  & 0,000\%  & 1,2462  & 6,034\%  & 1,2643  & 6,021\% \tabularnewline
3w  & 1,6844  & 1,6844  & -0,001\%  & 1,8210  & 5,684\%  & 1,6836  & 5,844\% \tabularnewline
1m  & 2,0418  & 2,0418  & 0,001\%  & 2,1698  & 5,522\%  & 2,0507  & 5,765\% \tabularnewline
2m  & 3,3126  & 3,3126  & 0,001\%  & 3,3997  & 5,308\%  & 3,3231  & 5,541\% \tabularnewline
3m  & 4,2047  & 4,2049  & 0,004\%  & 4,0221  & 5,531\%  & 4,1822  & 5,630\% \tabularnewline
6m  & 6,1892  & 6,1895  & 0,005\%  & 5,9304  & 5,685\%  & 6,2124  & 5,795\% \tabularnewline
9m  & 7,6209  & 7,6210  & 0,002\%  & 7,8537  & 5,953\%  & 7,7022  & 6,134\% \tabularnewline
1y  & 8,6815  & 8,6828  & 0,015\%  & 8,4780  & 6,330\%  & 8,8203  & 6,549\% \tabularnewline
\hline 
EPE  & 6,3807  & 6,3811  & 0,007\%  & 6,3196  & -0,957\%  & 6,4407  & 0,941\% \tabularnewline
\hline 
\end{tabular}
\par\end{centering}

{\footnotesize{}{}{\protect\protect\caption{{\footnotesize{}{}{EPE: $S_{0}=100,\sigma=30\%.$ \label{tab:100.30}}}}
}{\footnotesize \par}

{\footnotesize{}}}{\footnotesize \par}

{\footnotesize{}{}} 
\end{table}

\begin{table}[H]
\begin{centering}
{\footnotesize{}{}{}}%
\begin{tabular}{|c|c|cc|cc|cc|}
\cline{2-8} 
\multicolumn{1}{c|}{} & {\footnotesize{}{}{MC-Sobol $10^{6}$ }}  & \multicolumn{2}{c|}{{\footnotesize{}{}{Quantization}}} & \multicolumn{2}{c|}{{\footnotesize{}{}Monte Carlo}} & \multicolumn{2}{c|}{{\footnotesize{}{}{MC-Sobol $10^{3}$}}}\tabularnewline
\hline 
{\footnotesize{}{}{t }}  & {\footnotesize{}{}{EE }}  & \multicolumn{1}{c}{{\footnotesize{}{}{EE}}} & {\footnotesize{}{}{$\varepsilon$ }}  & \multicolumn{1}{c}{{\footnotesize{}{}{EE}}} & {\footnotesize{}{}{RSD }}  & \multicolumn{1}{c}{{\footnotesize{}{}{EE}}} & {\footnotesize{}{}{RSD }} \tabularnewline
\hline 
1w  & 5,9931  & 5,9931  & 0,000\%  & 5,9450  & 0,786\%  & 5,9939  & 0,777\% \tabularnewline
2w  & 5,9972  & 5,9972  & 0,000\%  & 6,0020  & 1,095\%  & 5,9974  & 1,105\% \tabularnewline
3w  & 6,0053  & 6,0053  & 0,000\%  & 6,0699  & 1,372\%  & 6,0046  & 1,350\% \tabularnewline
1m  & 6,0194  & 6,0194  & 0,000\%  & 6,1167  & 1,537\%  & 6,0213  & 1,546\% \tabularnewline
2m  & 6,1483  & 6,1482  & -0,001\%  & 6,3028  & 2,048\%  & 6,1578  & 2,136\% \tabularnewline
3m  & 6,3084  & 6,3083  & -0,001\%  & 6,2873  & 2,407\%  & 6,3192  & 2,483\% \tabularnewline
6m  & 6,7919  & 6,7918  & -0,002\%  & 6,7474  & 3,050\%  & 6,7916  & 3,188\% \tabularnewline
9m  & 7,1835  & 7,1835  & 0,000\%  & 7,2932  & 3,677\%  & 7,2228  & 3,721\% \tabularnewline
1y  & 7,5898  & 7,5915  & 0,022\%  & 7,4585  & 4,085\%  & 7,5753  & 4,210\% \tabularnewline
\hline 
EPE  & 6,9306  & 6,9310  & 0,005\%  & 6,9284  & -0,031\%  & 6,9385  & 0,114\% \tabularnewline
\hline 
\end{tabular}
\par\end{centering}

{\footnotesize{}{}{\protect\protect\caption{{\footnotesize{}{}{EPE: $S_{0}=110,\sigma=15\%.$ \label{tab:110.15}}}}
}{\footnotesize \par}

{\footnotesize{}}}{\footnotesize \par}

{\footnotesize{}{}} 
\end{table}

\begin{table}[H]
\begin{centering}
{\footnotesize{}{}{}}%
\begin{tabular}{|c|c|cc|cc|cc|}
\cline{2-8} 
\multicolumn{1}{c|}{} & {\footnotesize{}{}{MC-Sobol $10^{6}$ }}  & \multicolumn{2}{c|}{{\footnotesize{}{}{Quantization}}} & \multicolumn{2}{c|}{{\footnotesize{}{}Monte Carlo}} & \multicolumn{2}{c|}{{\footnotesize{}{}{MC-Sobol $10^{3}$}}}\tabularnewline
\hline 
{\footnotesize{}{}{t }}  & {\footnotesize{}{}{EE }}  & \multicolumn{1}{c}{{\footnotesize{}{}{EE}}} & {\footnotesize{}{}{$\varepsilon$ }}  & \multicolumn{1}{c}{{\footnotesize{}{}{EE}}} & {\footnotesize{}{}{RSD }}  & \multicolumn{1}{c}{{\footnotesize{}{}{EE}}} & {\footnotesize{}{}{RSD }} \tabularnewline
\hline 
1w  & 6,5122  & 6,5122  & 0,000\%  & 6,4141  & 1,465\%  & 6,5134  & 1,439\% \tabularnewline
2w  & 6,5989  & 6,5989  & 0,000\%  & 6,6139  & 1,926\%  & 6,6015  & 1,950\% \tabularnewline
3w  & 6,7315  & 6,7315  & 0,000\%  & 6,8607  & 2,310\%  & 6,7373  & 2,273\% \tabularnewline
1m  & 6,8813  & 6,8813  & 0,000\%  & 7,0735  & 2,479\%  & 6,8863  & 2,514\% \tabularnewline
2m  & 7,5927  & 7,5927  & 0,000\%  & 7,8513  & 2,975\%  & 7,5995  & 3,131\% \tabularnewline
3m  & 8,1894  & 8,1895  & 0,001\%  & 8,1085  & 3,360\%  & 8,2180  & 3,442\% \tabularnewline
6m  & 9,6447  & 9,6446  & -0,001\%  & 9,4633  & 3,984\%  & 9,6789  & 4,093\% \tabularnewline
9m  & 10,7365  & 10,7363  & -0,002\%  & 10,9562  & 4,531\%  & 10,7812  & 4,626\% \tabularnewline
1y  & 11,5732  & 11,5745  & 0,012\%  & 11,4042  & 4,941\%  & 11,5706  & 5,132\% \tabularnewline
\hline 
EPE  & 9,8664  & 9,8666  & 0,003\%  & 9,8547  & -0,118\%  & 9,8887  & 0,226\% \tabularnewline
\hline 
\end{tabular}
\par\end{centering}

{\footnotesize{}{}{\protect\protect\caption{{\footnotesize{}{}{EPE: $S_{0}=110,\sigma=25\%.$ \label{tab:110.25}}}}
}{\footnotesize \par}

{\footnotesize{}}}{\footnotesize \par}

{\footnotesize{}{}} 
\end{table}

\begin{table}[H]
\begin{centering}
{\footnotesize{}{}{}}%
\begin{tabular}{|c|c|cc|cc|cc|}
\cline{2-8} 
\multicolumn{1}{c|}{} & {\footnotesize{}{}{MC-Sobol $10^{6}$ }}  & \multicolumn{2}{c|}{{\footnotesize{}{}{Quantization}}} & \multicolumn{2}{c|}{{\footnotesize{}{}Monte Carlo}} & \multicolumn{2}{c|}{{\footnotesize{}{}{MC-Sobol $10^{3}$}}}\tabularnewline
\hline 
{\footnotesize{}{}{t }}  & {\footnotesize{}{}{EE }}  & \multicolumn{1}{c}{{\footnotesize{}{}{EE}}} & {\footnotesize{}{}{$\varepsilon$ }}  & \multicolumn{1}{c}{{\footnotesize{}{}{EE}}} & {\footnotesize{}{}{RSD }}  & \multicolumn{1}{c}{{\footnotesize{}{}{EE}}} & {\footnotesize{}{}{RSD }} \tabularnewline
\hline 
1w  & 6,7056  & 6,7056  & 0,000\%  & 6,5844  & 1,762\%  & 6,7063  & 1,727\% \tabularnewline
2w  & 6,8948  & 6,8948  & 0,000\%  & 6,9154  & 2,237\%  & 6,8984  & 2,270\% \tabularnewline
3w  & 7,1282  & 7,1282  & 0,000\%  & 7,3020  & 2,627\%  & 7,1311  & 2,602\% \tabularnewline
1m  & 7,3679  & 7,3680  & 0,000\%  & 7,6149  & 2,783\%  & 7,3787  & 2,836\% \tabularnewline
2m  & 8,3987  & 8,3987  & 0,000\%  & 8,6959  & 3,263\%  & 8,4147  & 3,434\% \tabularnewline
3m  & 9,2137  & 9,2140  & 0,003\%  & 9,0911  & 3,657\%  & 9,2276  & 3,754\% \tabularnewline
6m  & 11,1493  & 11,1492  & -0,001\%  & 10,8929  & 4,279\%  & 11,1882  & 4,389\% \tabularnewline
9m  & 12,5989  & 12,5984  & -0,003\%  & 12,8948  & 4,798\%  & 12,6484  & 4,921\% \tabularnewline
1y  & 13,6675  & 13,6689  & 0,010\%  & 13,4530  & 5,231\%  & 13,7136  & 5,432\% \tabularnewline
\hline 
EPE  & 11,4158  & 11,4160  & 0,002\%  & 11,3947  & -0,185\%  & 11,4523  & 0,320\% \tabularnewline
\hline 
\end{tabular}
\par\end{centering}

{\footnotesize{}{}{\protect\protect\caption{{\footnotesize{}{}{EPE: $S_{0}=110,\sigma=30\%.$ \label{tab:110.30}}}}
}{\footnotesize \par}

{\footnotesize{}}}{\footnotesize \par}

{\footnotesize{}{}} 
\end{table}

Below we present a couple of figures which clearly show how the VQ technique performs
with an excellent accuracy over the different situations. It is worth
to observe that, for some very deep out of the money situation, the
Monte Carlo simulations shows a huge relative standard error. This
occurs when the value of EE is next to zero. We remember that when
the EE is very small, no effective counterparty risk has to be faced
by the bank, hence, the magnitude of the error is not as dramatic
as it seems from a numerical perspective.

Eventually, we plotted the percent relative error $\varepsilon$ and
the percent relative standard error RSD, in order to compare, once
again, the quantization technique and the Monte Carlo method, showing
the excellent accuracy of the former approach, when compared with
the latter.

\begin{figure}[H]
\begin{centering}
\includegraphics[scale=0.5]{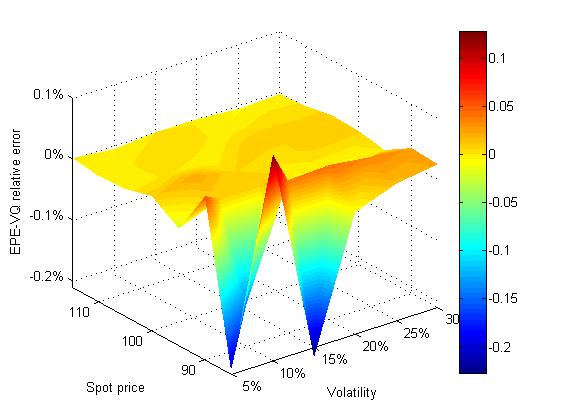} 
\par\end{centering}

{\footnotesize{}{}{{\protect\protect\caption{{\footnotesize{}{}{{VQ: EPE error $\varepsilon$ for portfolio\label{fig:EPE_VQ}}}}}
}{\footnotesize \par}

{\footnotesize{}}}{\footnotesize \par}

{\footnotesize{}{}}}{\footnotesize \par}

{\footnotesize{}{}{}} 
\end{figure}

\begin{figure}[H]
\begin{centering}
\includegraphics[scale=0.5]{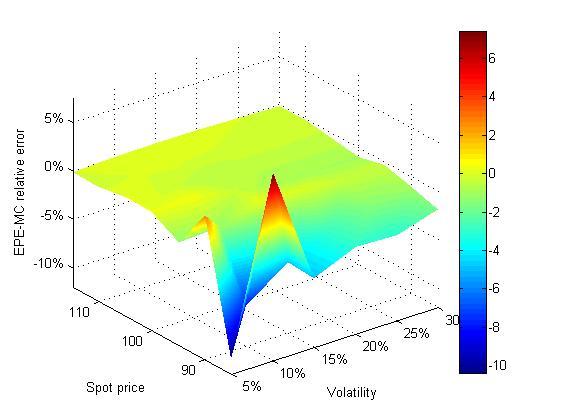} 
\par\end{centering}

{\footnotesize{}{}{{\protect\protect\caption{{\footnotesize{}{}{{MC: EPE relative error $RSD$ for portfolio\label{fig:EPE_MC}}}}}
}{\footnotesize \par}

{\footnotesize{}}}{\footnotesize \par}

{\footnotesize{}{}}}{\footnotesize \par}

{\footnotesize{}{}{}} 
\end{figure}

\section{Conclusions and Further Research}

\label{conc} In the present work we show how the quantization approach outperforms
the classical Monte Carlo methods in the CCR field. The counterparty
risk field poses a lot of theoretical and practical challenges. A
whole portfolio of derivatives must be evaluated in the future, for
several time steps and many scenarios, in order to get some useful
risk figures. This large amount of computations can be solved by improving
the technologies or the algorithmic strategies.

At the best of our knowledge, this is the first work which exploits the
quantization approach in this area. The quantization has been
intensively tested in the last decade in some pricing problems for
different complex situations: American style options, multidimensional
assets, credit derivatives. Despite this first study covers some simple
derivatives and small dimension portfolios, the quantization seems
to be very promising. Given an equivalent computational effort, it
is undoubtedly better than the standard Monte Carlo simulation and
some of its refinements such as the Sobol sequences.

Nevertheless, further research is needed. As the next step, we aiming at treating  more involved portfolios and payoffs, as in the case of path
dependent options, where the quantization tree could be a competitor
of binomial and trinomial trees. Moreover, concerning  portfolios
depending on many underlyings, there are issues related to the choice of a {\it coherent}  set of quantized paths that has to be fixed. 

Finally, numerical extensions as to be taken into account in order to pass from the accuracy comparison, given the same computational effort, to the search of 
the  quantization tradeoff, namely an estimate of its relative effort 
saving, given the same accuracy.

\end{document}